\setlist[enumerate]{leftmargin=.5in}
\setlist[itemize]{leftmargin=.5in}
\DeclareMathOperator*{\argmax}{\arg\!\max}
\DeclarePairedDelimiterX{\DistanceRelation}[2]{(}{)}{#1\;\delimsize\|\;#2}
\newacronym{ULA}{ULA}{unadjusted Langevin algorithm}
\newacronym{pULA}{pULA}{preconditioned unadjusted Langevin algorithm}
\newacronym{MALA}{MALA}{Metropolis adjusted Langevin algorithm}
\newacronym{IPLA}{IPLA}{interacting particle Langevin algorithm}
\newacronym{PGD}{PGD}{particle gradient descent}
\newacronym{statFEM}{statFEM}{statistical finite element method}
\newacronym{EM}{EM}{expectation-maximisation}
\newacronym{MMLE}{MMLE}{marginal maximum likelihood estimation}
\newacronym{MCMC}{MCMC}{Markov chain Monte Carlo}
\newacronym{SOUL}{SOUL}{stochastic optimization via unadjusted Langevin}
\newacronym{FEM}{FEM}{finite element method}
\newacronym{ODE}{ODE}{ordinary differential equation}
\newacronym{SDE}{SDE}{stochastic differential equation}
\newacronym{PDE}{PDE}{partial differential equation}
\newacronym{QoI}{QoI}{quantity-of-interest}
\newacronym{MAP}{MAP}{\textit{maximum a posteriori}}
\newacronym{MMAP}{MMAP}{marginal \textit{maximum a posteriori}}
\newacronym{GP}{GP}{Gaussian process}
\newacronym{UQ}{UQ}{uncertainty quantification}
\newacronym{MC}{MC}{Monte Carlo}
\newcommand{\vu}{\mathbf{u}}
\newcommand{\vx}{\mathbf{x}}
\newcommand{\vy}{\mathbf{y}}
\newcommand{\vz}{\mathbf{z}}
\newcommand{\vb}{\mathbf{b}}
\newcommand{\ve}{\mathbf{e}}
\newcommand{\vv}{\mathbf{v}}
\newcommand{\vr}{\mathbf{r}}
\newcommand{\vm}{\mathbf{m}}
\newcommand{\vs}{\mathbf{s}}
\newcommand{\vp}{\mathbf{p}}
\newcommand{\mA}{\mathbf{A}}
\newcommand{\mC}{\mathbf{C}}
\newcommand{\mG}{\mathbf{G}}
\newcommand{\mH}{\mathbf{H}}
\newcommand{\mM}{\mathbf{M}}
\newcommand{\mP}{\mathbf{P}}
\newcommand{\mR}{\mathbf{R}}
\newcommand{\mF}{\mathbf{F}}
\newcommand{\mJ}{\mathbf{J}}
\newcommand{\mSigma}{\boldsymbol{\Sigma}}
\newcommand{\md}{\mathrm{d}}
\newcommand{\cF}{\mathcal{F}}
\newcommand{\bE}{\mathbb{E}}
\newcommand{\bR}{\mathbb{R}}
\newtheorem{theorem}{Theorem}
\newtheorem{remark}{Remark}
\newtheorem{lemma}{Lemma}
\title{Statistical Finite Elements via Interacting Particle Langevin Dynamics}
\author{Alex Glyn-Davies, Connor Duffin, Ieva Kazlauskaite, Mark Girolami, \"{O}. Deniz Akyildiz}
\begin{document}
\maketitle

\begin{abstract}
In this paper, we develop a class of interacting particle Langevin algorithms to solve inverse problems for partial differential equations (PDEs). In particular, we leverage the statistical finite elements (statFEM) formulation to obtain a finite-dimensional latent variable statistical model where the parameter is that of the (discretised) forward map and the latent variable is the statFEM solution of the PDE which is assumed to be partially observed. We then adapt a recently proposed expectation-maximisation like scheme, interacting particle Langevin algorithm (IPLA), for this problem and obtain a joint estimation procedure for the parameters and the latent variables. We consider three main examples: (i) estimating the forcing for linear Poisson PDE, (ii) estimating the forcing for nonlinear Poisson PDE, and (iii) estimating diffusivity for linear Poisson PDE. We provide computational complexity estimates for forcing estimation in the linear case. We also provide comprehensive numerical experiments and preconditioning strategies that significantly improve the performance, showing that the proposed class of methods can be the choice for parameter inference in PDE models.
\end{abstract}

\section{Introduction}
\label{sec:intro}
Statistical estimation methods for \gls*{PDE} models are of significant recent interest as such approaches offer powerful tools to analyse complex systems which can be jointly described through both mechanistic and data-driven components. They allow for the estimation of partially or indirectly observed quantities-of-interest, which, depending on the setting may be a model input parameter (\textit{inversion})~\cite{stuart2010inverse,tarantola2005inverse}, or, a model state (\textit{data assimilation})~\cite{sanz-alonso2023Inversea}. In this work we are interested in solving these problems jointly, i.e., the joint parameter and state estimation problem within the context of static stochastic \glspl*{PDE}. We aim at tackling the problem of inference for \glspl*{PDE} using a recent statistical construction termed the \gls*{statFEM}~\cite{akyildiz2022statistical,duffin2021statistical,girolami2021statistical}, which provides a statistical model of the \gls*{PDE} solution. Using this statistical model, we aim at solving the joint state and parameter estimation problem using the \gls*{IPLA}~\cite{akyildiz2025Interacting}, a recently proposed class of statistical estimation methods for latent variable models.

To set up the context, consider the following stochastic \gls*{PDE} with the differential operator $\mathcal{L}_\theta$ parameterised by $\theta$, and forcing $f$
\begin{align}\label{eq:pde-model}
  \mathcal{L}_\theta u = f + \epsilon, \quad
  \epsilon \sim \mathcal{GP}(0, k)
\end{align}
where $k$ is a (regular enough) kernel function (see section~\ref{subsec:statfem-linear} for details). A data generating process can then be defined through the continuous \textit{observation operator} $\mathcal{H} \colon L^2(\Omega) \rightarrow \mathbb{R}^{n_y}$ 
\begin{align}\label{eq:observation-model} 
  \vy = \mathcal{H}(u) + \vr, \quad
  \vr \sim \mathcal{N}(\mathbf{0},\mathbf{R}).
\end{align}
The equations \eqref{eq:pde-model}--\eqref{eq:observation-model} specify a statistical model, which can be used to infer the latent PDE solution $u$. This is a standard problem, having been the subject of much recent attention with, for example, the \gls*{statFEM}~\cite{akyildiz2022statistical,duffin2021statistical,girolami2021statistical}. While these methods can be generically used for sampling the parameters jointly, there is a gap in the literature for dedicated methods to obtain point estimates of the parameters directly, which may avoid a separate procedure to get maximisers from samples. Denote by $\vz$ these unknown parameters, which may comprise any unknown model components, e.g. PDE parameters $\theta$, or the forcing function $f$.
Discretising~\eqref{eq:pde-model} using \gls*{statFEM}, we obtain a model state vector $\vu$, and conditional distribution $p(\vu|\vz)$. To estimate the parameters of the \gls*{PDE} (that are denoted as $\vz$), we aim at maximising the marginal likelihood by solving the following optimisation problem
\begin{equation}
  \label{eq:mmle}
  \vz_{\mathrm{MMLE}} = \argmax_{\vz \in \mathcal{Z}} \log p(\vy | \vz), \quad \text{where} \quad p(\vy | \vz) := \int p(\vy | \vu) \, p(\vu | \vz) \md \vu,
\end{equation}
which is termed the \gls*{MMLE} problem. When we include a prior distribution in the model, the \gls*{MMAP} problem can be solved by maximising the marginal posterior distribution
\begin{equation}
    \label{eq:mmap}
  \vz_{\mathrm{MMAP}} = \argmax_{\vz \in \mathcal{Z}} \log p(\vz , \vy), \quad \text{where} \quad p(\vz , \vy) := \int p(\vy | \vu) \, p(\vu | \vz) p(\vz) \md \vu.
\end{equation}
In this paper we will focus on this estimation, using the so-called \gls*{IPLA}, which also provides us with estimates of the posterior $p(\vu | \vy, \vz)$. Our choice of \gls*{IPLA} is motivated by its theoretical guarantees, which allow us to conduct some analysis in our proposed model setting. To set the scene for this work, we now go through a review of the relevant literature.

\subsection{Statistical Finite Elements}
Introduced in \cite{girolami2021statistical}, the \gls*{statFEM} was developed for calibrating finite element models using observational data. Uncertainties from model misspecification and observational noise are assumed, and are used to construct a statistical model describing how the stochastic \gls*{FEM} solution generates the data. A Bayesian perspective is taken, and the prior over the \gls*{FEM} solution is updated based on observations via posterior inference. This has been extended to time-varying \glspl*{PDE} in \cite{duffin2021statistical}, where filtering techniques are used to sequentially update the posterior distribution. The Bayesian framework also allows for estimation of random field parameters based on the marginal likelihood of the data \cite{duffin2021statistical, febrianto2022Digital, girolami2021statistical}. Recently, theoretical advances have been made, including error analysis in \cite{karvonen2022error} and convergence analysis with mesh refinement in \cite{papandreou2023Theoretical}.
Langevin dynamics has been proposed as a method for efficient sampling from the posterior distribution \cite{akyildiz2022statistical}, which includes analysis of convergence of the sampler to the target distribution.

\subsection{Marginal MLE and MAP}
In this section, we first review the statistical literature pertaining to the \gls*{MMLE}. The estimation problem in~\eqref{eq:mmle} can be solved with the classical \gls*{EM} algorithm~\cite{dempster1977maximum}, or, the stochastic/Monte Carlo variants thereof~\cite{celeux1992stochastic,wei1990monte}. For a significant number of models, the \gls*{EM} can be implemented via stochastic approximations when the posterior distribution is tractable and easy to sample from. When this sampling is infeasible, the E-step can be approximated using \gls*{MCMC} methods~\cite{delyon1999convergence,atchade2017perturbed}, which provide a rich framework for bespoke inference. Recent extensions have considered \textit{unadjusted} \gls*{MCMC}, whereby the Metropolis adjustment is skipped for computational simplicity. Perhaps the most notable of these is the \gls*{SOUL} algorithm~\cite{de2021efficient}, which makes use of the \gls*{ULA}~\cite{dalalyan2017theoretical,durmus2017nonasymptotic}, to solve \gls*{MMLE} problems. Through Fisher's identity the \gls*{SOUL} approach leverages a stochastic approximation, iteratively running the E-step with a \gls*{ULA} chain, then using these samples to compute the M-step. The main bottleneck of this method is that it requires a unique Markov chain to be run to compute the E-step, for each M-step.

A similar approach is taken in~\cite{kuntz2023Particle}, which replaces the coordinate-wise procedure of~\cite{de2021efficient} with an interacting particle system. This method constructs $N$ particles for the latent variables (instead of running a chain in time) and builds a procedure in the joint space of $N$ latent variables and parameters of interest. The authors demonstrate favourable convergence properties and computational advantages over \gls*{SOUL}, with additional theoretical results \cite{caprio2024error}. In this work, we follow a similar approach, closer to the standard Langevin dynamics, namely the \gls*{IPLA}~\cite{akyildiz2025Interacting}. This modifies the particle gradient descent of~\cite{kuntz2023Particle} through modifying the parameter update by noising it, hence forming a \gls*{SDE} instead of a system of mixed \glspl*{SDE} and \gls*{ODE} as in \cite{kuntz2023Particle}. Taking such an approach enables straightforward nonasymptotic results for estimating $\vz$, makes the analysis akin to the standard analysis of Langevin diffusions. This similarity allows us to adapt results from standard discretisation error analysis of the Langevin diffusion, which we leverage in this work.

The problem of joint state and parameter estimation has received attention from various communities, including physics~\cite{abarbanel2009Dynamical}, chemistry~\cite{dochain2003State}, electrical engineering~\cite{wan1997Dual,wan2000unscented}, data assimilation~\cite{moradkhani2005Dual,evensen2009ensemble}, and statistics~\cite{kantas2015Particle}. 
For the extension to \gls*{PDE} systems, Bayesian inverse problems for \glspl*{PDE} tackle the problem of parameter estimation (i.e. $\vz = \theta$) from data or \textit{inversion}, but typically assume a \textit{known} (and deterministic) \gls*{PDE} forward model, which is restrictive in the modelling of uncertainties. 
Estimation of \gls*{PDE} forcing/source terms from partial observations (i.e. $\vz = f$) is relevant to climate modelling, for example the estimation of greenhouse gas sources from satellite observations~\cite{nassar2011inverse} or identification of pollutant sources~\cite{atmadja2001pollution}. This is also of interest to the engineering community, for example in structural health monitoring for estimating unknown forcing from sensor measurements~\cite{erdogan2014investigation}.
By constructing a probabilistic forward model based on our assumed \gls*{PDE}, we allow for principled accounting for model misspecification, and joint estimation of parameters and uncertainty quantification for our \gls*{PDE} solution.

\subsection{Our contribution}
Previous work has shown that sampling from~\eqref{eq:pde-model}, and its resultant posterior distribution can be efficiently done using \gls*{ULA}~\cite{akyildiz2022statistical}. However, given observations and unknown parameters, the resulting \gls*{PDE} model is a latent variable statistical model where the inference of parameters and the latent field requires more than just sampling algorithms. We build on the interacting particle solutions which are shown to be efficient for \gls*{MMLE} problems~\cite{akyildiz2025Interacting,kuntz2023Particle,encinar2024proximal,oliva2024kinetic}.

Through the use of \gls*{IPLA} in combination with \gls*{statFEM}, this paper provides a novel algorithmic framework to conduct PDE state and parameter estimation, studied in the context of elliptic \glspl*{PDE}. Convergence of the schemes is analysed, and we provide thorough numerical evidence of their performance. More specifically:
\begin{itemize}
    \item We develop an algorithm for forcing estimation (section~\ref{subsec:linear-poisson}) and solving the inverse problem (section~\ref{subsec:inverse-problem}) in section~\ref{sec:ipla-linear-poisson}, which is applied to the linear Poisson \gls*{PDE}. 
    Our choice of Poisson \glspl*{PDE} is for demonstration since our algorithms can be similarly adapted to other \glspl*{PDE}. 
    \item We develop preconditioning strategies within this section to improve the condition number of the problem. We also introduce warm-start strategies to avoid the numerical difficulties caused by the ill-conditioning of the problem.
    \item We then provide, in section~\ref{subsec:convergence_forcing_MMAP}, the rate of convergence for the forcing estimation procedure presented in section~\ref{subsec:linear-poisson}. In particular, in Theorem~\ref{thm:convergence_forcing}, we prove a nonasymptotic bound for the \gls*{IPLA} procedure as applied to the forcing estimation problem and provide complexity estimates in Remark~\ref{rem:complexity}. Our analysis shows that, provided that the number of particles $N$ is chosen such that $N = \mathcal{O}(\varepsilon^{-2} n_u)$ and the step-size $\gamma$ of our method is chosen so that $\gamma = \mathcal{O}(\varepsilon^2 n_u^{-1} \kappa^{-2})$ where $n_u$ is the degrees of freedom and $\kappa$ is the condition number of the problem, our method provably achieves $\varepsilon$ error in $\widetilde{\mathcal{O}}(n_u \kappa^2 \varepsilon^{-2})$ steps. This also shows why preconditioning is crucial for our case. 
    \item We provide experiments to demonstrate the order of convergence proved in Theorem~\ref{thm:convergence_forcing} using an analytical example, as well as the benefit preconditioning through computation of the condition number (section~\ref{subsec:forcing-linear-results}). We show the benefit of using a warm-start for solving the inverse problem in section~\ref{subsec:diffusivity-estimation}.
    \item We develop an algorithm for forcing estimation in the nonlinear \gls*{statFEM} setting (section~\ref{subsec:statfem-nonlinear}), which we apply to the nonlinear Poisson \gls*{PDE} (section~\ref{subsec:nonlinear-poisson}). We compare three different nonlinear approximations for forcing estimation in section~\ref{subsec:forcing-nonlinear-results}.
\end{itemize}

This paper is organised as follows: in section \ref{sec:background} we introduce the \acrlong*{statFEM} for linear \glspl*{PDE} (section~\ref{subsec:statfem-linear}) before introducing Langevin dynamics for \gls*{statFEM} (section~\ref{subsec:technical-ULA}) and the \acrlong*{IPLA} (section~\ref{subsec:technical-IPLA}). 
Section~\ref{sec:ipla-linear-poisson} shows how the \gls*{statFEM} construction can be used to embed the probabilistic model for the linear Poisson PDE within the \gls*{IPLA} for \gls*{MMLE}/\gls*{MMAP} estimation. The algorithms and preconditioners are given in section~\ref{subsec:linear-poisson} for forcing estimation, and section~\ref{subsec:inverse-problem} for solving the inverse problem.
Theoretical results are derived in section~\ref{sec:theoretical-results}. Experimental results for the linear Poisson \gls*{PDE} are given in section~\ref{sec:experimental-results}, including the forcing estimation (section~\ref{subsec:forcing-linear-results}), and the inverse problem (section~\ref{subsec:diffusivity-estimation}) results. 
In section~\ref{sec:nonlinear}, we extend the methodology to nonlinear \gls*{statFEM}. Section~\ref{subsec:statfem-nonlinear} derives the form of the \gls*{statFEM} prior for the nonlinear Poisson \gls*{PDE}, and section~\ref{subsec:nonlinear-poisson} outlines three different approximations to this prior for use in the \gls*{IPLA} algorithm. Forcing estimation results are given in section~\ref{subsec:forcing-nonlinear-results}.
We conclude the paper in section \ref{sec:conclusions}.

\section{Technical Background}
\label{sec:background}

\subsection{Statistical Finite Elements for Linear PDEs}
\label{subsec:statfem-linear}
The \acrlong*{statFEM} constructs a prior distribution over the finite element coefficients of a \gls*{PDE} solution, based on an additive \gls*{GP} forcing error. For linear differential operators, this induced distribution is itself a Gaussian, and can be found in closed form. Consider an elliptic \gls*{PDE} with homogeneous Dirichlet boundary conditions \cite{evans2010partial}, with linear differential operator $\mathcal{L}_\theta$, parametrised by $\theta$, acting on the solution $u\coloneqq u(x)$ defined over some domain $x \in \Omega$, 
\begin{align}
    \mathcal{L}_\theta u &= f + \epsilon, \quad x\in \Omega\nonumber\\
    u &= 0, \quad x\in \partial \Omega,
\end{align}
with the forcing $f\in L^2(\Omega)$, and the additive \gls*{GP} noise by $\epsilon \sim \mathcal{GP}(0, k)$ with covariance function $k \colon \Omega \times \Omega \rightarrow \mathbb{R}$. 
To convert the strong form of the \gls*{PDE} to the weak form we multiply both sides by a test function $v \in V$, where $V$ is an appropriate function space (e.g. $H^1_0(\Omega)$, the Sobolev space of square-integrable functions with square integrable first-order weak derivatives) and integrate over the domain,
\begin{align}
    \int_{\Omega}(\mathcal{L}_\theta u) v \md x = \int_{\Omega} (f + \epsilon)v\md x, \quad \forall v\in V.
\end{align}
For the elliptic \gls*{PDE}, the order of differentiation can be reduced via integration by parts yielding a bilinear form $\mathcal{A}_\theta (u,v)$. This gives the following variational form \begin{align}
    \mathcal{A}_\theta (u,v) = \langle f + \epsilon, v \rangle, \quad \forall v\in V,
\end{align}
where $\langle \cdot, \cdot \rangle$ is the $L^2(\Omega)$ inner product. The \gls*{FEM} method forms a discrete approximation with a finite-dimensional set of basis functions defined over the domain, $\left\{\phi_{j}(x)\right\}_{j=1}^{n_u}$. We search for solutions in the subspace $V_h \subset V$ defined by the span of these basis functions $V_{h} = \mathrm{span}\left\{\phi_{j}(x)\right\}_{j=1}^{n_u}$, where the parameter $h$ refers to the degree of mesh-refinement of the domain. The basis function expansion $u_h(x) = \sum_{j=1}^{n_u}\hat{u}_j\phi_j(x)$ provides a finite-dimension variational problem
\begin{align}
    \mathcal{A}_\theta (u_h, \phi_j) = \sum_{i=1}^{n_u}\hat{u}_i\mathcal{A}_\theta (\phi_i, \phi_j) = \left\langle f, \phi_j \right\rangle + \left\langle \epsilon, \phi_j \right\rangle, \quad \forall j\in \{1,\dots, n_u\},
\end{align}
which can be rewritten as the following finite-dimensional linear system
\begin{align}\label{eq:fem-poisson}
    \mA_\theta\vu = \vb + \ve, \quad  \ve \sim \mathcal{N}\left(\mathbf{0}, \mG\right),
\end{align}
with $\vu = \left[\hat{u}_1, \hat{u}_2, \dots, \hat{u}_{n_u}\right]^{\top}$, $\left[\mA_\theta \right]_{ij} = \mathcal{A}_\theta (\phi_i, \phi_j)$, $\left[\vb\right]_i = \langle f, \phi_i\rangle$, $\left[\mG\right]_{ij} = \langle \phi_i, \langle k(\cdot, \cdot), \phi_j \rangle\rangle$. Since Equation \eqref{eq:fem-poisson} is linear and the additive noise is Gaussian, we have a closed form expression for the prior over \gls*{FEM} coefficients that is also Gaussian and defined by
\begin{align}\label{eq:statfem-prior-elliptic}
    p(\vu | \theta, \vb) = \mathcal{N}\left(\mA_\theta^{-1}\vb, \mA_\theta^{-1} \mG \mA_\theta^{-\top}\right).
\end{align}
Provided the bilinear form $\mathcal{A}_\theta$ is coercive on $V$ the problem is well-posed, the stiffness matrix $\mA_\theta$ will be positive definite, implying its inverse exists (see e.g.~\cite{ern2004theory}[Remark~2.20]). For setting the homogeneous Dirichlet boundary conditions, we set the rows corresponding to boundary nodes to the row of the identity matrix, and the forcing vector to zero. 
This can be adapted for inhomogeneous boundary conditions via boundary condition lifting (see e.g.~\cite{hughes2003finite}[Page~8]), which adapts the entries of the right-hand side vector to the non-zero boundary condition.
We choose to constrain the covariance function $k$ to be zero on the boundary, so that boundary conditions are imposed exactly. To achieve this, we form an approximation to the \gls*{GP}, as seen in \cite{solinHilbertSpaceMethods2020}, which has the benefit in our case of being constrained to zero covariance on the boundary. The eigenfunctions of the Laplacian operator (essentially the vibrational modes of the domain) form the basis, which are weighted based on the spectrum of the covariance function. This approximation is given by
\begin{align}\label{eq:hilbert-covariance-approx}
    k(x,x') \approx \sum_{l=1}^{m} S\left(\sqrt{\lambda_l}\right) g_l(x) g_l(x')
\end{align}
where the function $g_l(x) = \sum_{l=1}^{n_u}\tilde{g}_{li}\phi_i(x)$ is the $L^2$-normalised \gls*{FEM} approximation of the $l^{\text{th}}$ eigenfunction of the Laplacian operator, $\lambda_l$ the corresponding eigenvalue, and $m$ the rank of the approximation. The weighting function $S(\cdot)$ is the spectral density of the covariance function $k$. Note this requires an isotropic covariance function, i.e. such that we can write $k(x, x') = k(\Vert x - x'\Vert)$. This is detailed in Appendix \ref{appendix:hilbert-space-GP}.

\subsection{Statistical Finite Elements via Langevin Dynamics}
\label{subsec:technical-ULA}
Having derived the prior density, we can construct a Langevin \gls*{SDE} that takes samples from this distribution by defining the potential $\Phi_{\theta}(\vu, \vb) \coloneqq -\log p(\vu | \theta, \vb)$. The Langevin \gls*{SDE} with this potential as the drift has the prior distribution as its limiting distribution:
\begin{align}
    \md \vu_{t} = -\nabla_{\vu}\Phi_{\theta}(\vu_t, \vb)\md t + \sqrt{2} \, \md \mathbf{B}_t,
\end{align}
where $\left(\mathbf{B}_t\right)_{t\geq 0}$ is a $n_u$-dimensional Brownian motion. The Euler-Maruyama discretisation is used to simulate from this Langevin \gls*{SDE}, which approximates the transition density between successive time-points with a Gaussian distribution \cite{platen2010numerical}. The associated discrete-time Markov chain
\begin{align}\label{eq:langevin-markov-chain}
    \vu_{k+1} = \vu_k - \gamma\nabla_{\vu}\Phi_{\theta}(\vu_k, \vb) + \sqrt{2\gamma} \, \boldsymbol{\zeta}_{k+1}
\end{align} 
where $\left\{\boldsymbol{\zeta}_k\right\}_{k\in \mathbb{N}}$ is a sequence of i.i.d. standard Gaussian random variables, and $\gamma>0$ controls the degree of time-discretisation. This discretisation induces a bias, the degree of which is determined by the step-size $\gamma$. For some applications this bias may be unacceptable, and can be corrected by a Metropolis-Hastings style accept/reject step for each iteration of the algorithm (this is known as \gls*{MALA} \cite{girolami2011riemann}) under the condition that the unnormalised density can be evaluated directly, not just the score as for the unadjusted Langevin algorithm (ULA). \gls*{ULA} does not perform this accept/reject stage and is therefore more computationally efficient than \gls*{MALA}, at the cost of introducing discretization bias \cite{dalalyan2017theoretical, dalalyan2019user, durmus2019high}.

For the examples in this paper, the observation operator $\mathcal{H}$ is described by the point-wise evaluation of the \gls*{PDE} solution at a set of observation locations in the domain, $\left\{x_{j}\right\}_{j=1}^{n_y}$. The discrete observation operator, $\mH \colon \mathbb{R}^{n_u} \rightarrow \mathbb{R}^{n_y}$, maps the \gls*{FEM} coefficients to observations by interpolating the \gls*{FEM} solution onto these observation locations. We also assume an additive Gaussian noise on the observations, which sets up the linear discrete observation model
\begin{align}
    \vy = \mH \vu + \vr, \quad \vr \sim \mathcal{N}\left(\mathbf{0}, \mR\right),
\end{align}
where $\mR \in \mathbb{R}^{n_y\times n_y}$ is the noise covariance matrix.
In the context of Bayesian inference, the target is the \textit{posterior} distribution, $p(\vu|\vy, \theta, \vb) = p(\vu, \vy, \theta, \vb)/\int p(\vu, \vy, \theta, \vb)\md \vu$, which in general has an intractable normalisation constant due to the integral $\int p(\vu, \vy, \theta, \vb)\md \vu$. Since \gls*{ULA} only relies on the score of the target density, samples from the posterior can be generated without having to compute the normalisation constant \cite{welling2011bayesian}. 
For this posterior inference, the joint potential of latent variables and data $\Phi_{\theta}^y(\vu, \vb) \coloneqq \Phi_{\theta}(\vu, \vb) -\log p(\vy|\vu)$ is used; the invariant distribution of the following Langevin \gls*{SDE} is posterior $p(\vu | \vy, \theta, \vb)$
\begin{align}\label{eq:langevin_posterior}
    \md \vu_{t} = - \nabla_{\vu}\Phi_{\theta}^y(\vu_t, \vb) \, \md t + \sqrt{2} \, \md \mathbf{B}_t
\end{align}
where $\left(\mathbf{B}_t\right)_{t\geq 0}$ is a $n_u$-dimensional Brownian motion.

\subsection*{Preconditioned diffusion} For some problems, the potential may be ill-conditioned requiring small step-sizes for numerical stability, leading to inefficient sampling. This can be alleviated by preconditioning the Langevin dynamics with a symmetric \textit{preconditioner}, $\mP\in\mathbb{R}^{d\times d}$ \cite{girolami2011riemann}. The preconditioned Langevin \gls*{SDE} is simply,
\begin{align}\label{eq:pULA-SDE}
    \md \vu_{t} = -\mP\nabla_{\vu}\Phi_{\theta}^y(\vu_t, \vb)\md t + \sqrt{2} \, \mP^{1/2} \md \mathbf{B}_t.
\end{align}
Preconditioning can improve the stability and convergence of the sampling scheme, whilst leaving the stationary measure invariant \cite{akyildiz2022statistical}. Section \ref{subsec:linear-poisson} shows how preconditioning is applied for posterior sampling.

\subsection{Interacting Particle Langevin Dynamics} 
\label{subsec:technical-IPLA}
Whilst sections below will detail the \gls*{IPLA} algorithm for forcing estimation and solving the inverse problem, we provide a brief overview here. \gls*{IPLA} simultaneously evolves a set of particles with Langevin dynamics, which are used to estimate the gradient of the marginal likelihood with respect to parameters. A system of $N$ particles, $\{\vu_t^{(n)}\}_{n=1}^{N}$ are simulated from the Langevin dynamics described in \eqref{eq:langevin_posterior}, with independent Brownian motions $(\mathbf{B}_{t}^{(n)})_{t\geq 0}$ for $n = 1, \ldots, N$. These particles and parameters are jointly performing gradient flow on the negative joint log likelihood where the parameter process is driven with a scaled Brownian motion $\sqrt{{2}/{N}}\md \mathbf{B}_{t}^{(0)}$. This noise term, which is the main difference from the methods introduced in \cite{kuntz2023Particle}, allows for non-asymptotic analysis, and in \cite{akyildiz2025Interacting}, it is shown that given Lipschitz and strong convexity assumptions on the potential, the resulting parameter estimate will converge to the true \gls*{MMLE}, and the rate of convergence is derived in terms of the Lipschitz and strong convexity constants, and time-discretization step-size. The method has also shown to be adaptable to the nondifferentiable and underdamped cases, see, e.g. \cite{encinar2024proximal,oliva2024kinetic}. 
Sections \ref{subsec:linear-poisson} and \ref{subsec:inverse-problem} give the resulting interacting particle systems for estimation of the forcing function and solving the inverse problem, respectively.

\section{Interacting Particle Langevin Dynamics for Linear Poisson PDE}
\label{sec:ipla-linear-poisson}
One example of an elliptic \gls*{PDE} is the Poisson equation, which describes processes such as steady-state heat diffusion. The strong form of the linear Poisson equation with homogeneous diffusivity coefficient $\varkappa$, is given by  
\begin{align}\label{eq:linear-poisson-new}
    -\nabla \cdot (\varkappa\nabla u(x)) &= f + \epsilon, \quad x\in \Omega\\
    u(x) &= 0, \quad x\in\partial\Omega.
\end{align}
In our case we use the exponential mapping to reparameterise the strong form with diffusivity $\varkappa$, by $\theta = \log\varkappa$ to ensure non-negativity.
The bilinear form associated with this strong form, $\mathcal{A}_\theta(u,v) = e^\theta\left\langle \nabla u, \nabla v \right\rangle$, produces the stiffness matrix $\left[\mA_\theta\right]_{ij} = e^\theta\left\langle \nabla \phi_i, \nabla \phi_j \right\rangle$. Since the bilinear form is symmetric, the stiffness matrix is symmetric positive definite, and the linear system, $\mA_\theta \vu = \vb$, will have a unique solution. 

The joint likelihood model, $p(\vy,\vu|\vb,\theta)$, factorises as the product of the following Gaussian distributions
\begin{align}
    p(\vu | \theta, \vb) = \mathcal{N}\left(\mA_{\theta}^{-1}\vb, \mA_{\theta}^{-1} \mG \mA_{\theta}^{-\top}\right),\quad
    p(\vy | \vu) = \mathcal{N}\left(\mH \vu, \mR\right),
\end{align}
which defines the potential $\Phi_{\theta}^{\vy}(\vu,\vb)$, which can be used for posterior sampling or for \gls*{MMLE} of either $\vb$, $\theta$ using \gls*{IPLA}.

To further enable \gls*{MMAP} estimation of the forcing $\vb$, and the parameter $\theta$, we require prior distributions over these quantities. We will assume independence $p(\vb, \theta) = p(\vb)p(\theta)$, as we will be estimating these individually. The discretised prior distribution over the forcing coefficients, $\vb$, can be induced from a \gls*{GP} prior over the forcing function. 
Given $f \sim \mathcal{GP}(\mu_f, k_f)$, we have $\left[\boldsymbol{\mu}\right]_i = \langle \phi_i, \mu_f \rangle$, $\left[\boldsymbol{\Sigma}\right]_{ij} = \langle \phi_i, \langle k_f(\cdot, \cdot),\phi_j \rangle \rangle$. For the parameter $\theta$ we assume a Gaussian prior, which is equivalent to a log-normal prior over the diffusivity, but with the added constraint of non-negativity when performing gradient update steps. The prior distributions are as follows
\begin{align}
p(\vb) = \mathcal{N}\left(\boldsymbol{\mu}, \boldsymbol{\Sigma}\right), \quad
p(\theta) = \mathcal{N}\left(\mu_{\theta}, \sigma^2_{\theta}\right).
\end{align}
We have now defined each component of the full joint probabilistic model $p(\vy, \vz, \vu, \vb, \theta) = p(\vy|\vu)p(\vu|\vb,\theta)p(\vb)p(\theta)$, which we use to define the joint potential for the \gls*{MMAP} estimation, using the potential in \eqref{eq:langevin_posterior} for posterior sampling, and the prior distributions $p(\vb), p(\theta)$:
\begin{align}
    \Psi_{\theta}^y(\vu,\vb) := \Phi_{\theta}^y(\vu,\vb) - \log p(\vb) - \log p(\theta).
\end{align}
Using the forcing estimation as an example to illustrate the standard \gls*{IPLA}, we present the interacting Langevin \gls*{SDE} for estimation of $\vb$, with $\theta$ fixed between iterations. For a system of $N$ particles, $\{\vu_t^{(n)}\}_{n=1}^{N}$, the interacting Langevin \gls*{SDE} has the following form
\begin{align}
    \md\vb_{t} &= - \frac{1}{N}\sum_{n=1}^{N}\nabla_{\vb}\Psi_{\theta}^y(\vu_{t}^{(n)}, \vb_t)\md t + \sqrt{\frac{2}{N}} \, \md \mathbf{B}_t^{(0)} \label{eq:ipla-sde-b}\\
    \md\vu_{t}^{(n)} &=  - \nabla_{\vu} \Psi_{\theta}^y(\vu_t^{(n)}, \vb_t)\md t + \sqrt{2} \, \md\mathbf{B}_t^{(n)}, 
 \quad \forall n\in\{1,\dots,N\}, \label{eq:ipla-sde-u}
\end{align}

where $(\mathbf{B}_t^{(n)})_{t\geq0}$ is an independent Brownian motion for each $n$. This can be discretised to form the update steps, which are given below for the k\textsuperscript{th} iteration
\begin{align}
    \vb_{k+1} &= \vb_k - \frac{\gamma}{N}\sum_{n=1}^{N}\nabla_{\vb}\Psi_{\theta}^y(\vu_{k}^{(n)}, \vb_k)+ \sqrt{\frac{2\gamma}{N}} \, \boldsymbol{\zeta}_{k+1}^{(0)}\label{eq:ipla-forcing-b}\\
    \vu_{k+1}^{(n)} &= \vu_k^{(n)} - \gamma\nabla_{\vu} \Psi_{\theta}^y(\vu_k^{(n)}, \vb_k) + \sqrt{2\gamma} \, \boldsymbol{\zeta}_{k+1}^{(n)}, \quad \forall n\in\left\{1, \dots, N\right\},\label{eq:ipla-forcing-u}
\end{align}
where for each $n\in\{1,\dots,N\}$, $\{\boldsymbol{\zeta}_k^{(n)}\}_{k\in \mathbb{N}}$ is a sequence of i.i.d. standard Gaussian random variables.
This can simply be adapted to \gls*{MMAP} estimation of the diffusivity, by instead fixing $\vb$, and replacing \eqref{eq:ipla-sde-b} to evolve $\theta$ with the gradient of the joint potential $\nabla_{\theta}\Psi_{\theta}^y(\vu, \vb)$. In the following two sections we derive the particular forms of the \gls*{IPLA} update steps for both the forcing estimation and solving the inverse problem, including preconditioning.

\subsection{Forcing Estimation for Linear Poisson PDE}
\label{subsec:linear-poisson}
Here, we derive the \gls*{IPLA} update steps for estimating the forcing $\vb$, while taking the diffusivity as known and constant. 
We also apply a fixed preconditioner to this potential to improve stability and sample efficiency. The preconditioners for the latent sampling $\mathbf{P}_{\vu}$, and for parameter gradient descent $\mathbf{P}_{\vb}$ are set to the inverse of the block diagonal Hessian matrices (see Equation \eqref{eq:hessian-poisson-forcing}) as follows:
\begin{align}\label{eq:block-preconditioning-new}
  \mathbf{P}_{\vb} = \left[\mG^{-1} + \boldsymbol{\Sigma}^{-1}\right]^{-1},
  \quad
  \mathbf{P}_{\vu} = \left[\mA_{\theta}^{\top}\mG^{-1}\mA_{\theta} + \mH^{\top}\mR^{-1}\mH\right]^{-1},
\end{align}
which individually preconditions both the posterior sampling and the parameter gradient descent with their exact Hessians, but ignores their interaction (off-diagonal terms of the full block Hessian). With this choice of preconditioning the forcing estimation simplifies into the following update steps
\begin{align}
    \vb_{k+1} &= (1-\gamma)\vb_k + \gamma\mathbf{P}_{\vb}\left(\mG^{-1}\mA_{\theta}\left(\frac{1}{N}\sum_{n=1}^{N}\vu_{k}^{(n)}\right) 
    + \boldsymbol{\Sigma}^{-1}\boldsymbol{\mu}\right)
    + \sqrt{\frac{2\gamma}{N}} \mathbf{P}_{\vb}^{1/2} \boldsymbol{\zeta}_{k+1}^{(0)},\label{eq:forcing-preconditioning-b}\\
    \vu_{k+1}^{(n)} &= (1-\gamma)\vu_{k}^{(n)}  + \gamma \mathbf{P}_{\vu}\left( \mA_{\theta}^{\top}\mG^{-1}\vb_k + \mH^{\top}\mR^{-1}\vy\right) + \sqrt{2\gamma}\mathbf{P}_{\vu}^{1/2}\boldsymbol{\zeta}_{k+1}^{(n)},\label{eq:forcing-preconditioning-u} 
\end{align}
for all $n \in \{1, \dots, N\}$, with $\{\boldsymbol{\zeta}_k^{(n)}\}_{n=0}^N$ as i.i.d. standard $n_u$-dimensional Gaussians.

\subsection{Solving the Inverse Problem for Linear Poisson PDE}
\label{subsec:inverse-problem}
Here, we derive the update steps for estimating the diffusivity, given a known forcing function.
Preconditioning is applied at each time-step to the particle Langevin dynamics, based on the current value of $\theta_k$, where $\mP_\vu [\theta_k] =  \left[\mA_{\theta_k}^{\top}\mG^{-1}\mA_{\theta_k} + \mH^{\top}\mR^{-1}\mH\right]^{-1}$. Writing out the \gls*{IPLA} update steps for this problem, we have
\begin{align}
    \theta_{k+1} &= \theta_k - \gamma \left( n_u + \frac{\theta_k - \mu_\theta}{\sigma_\theta^2}\right) - \frac{\gamma}{N}\sum_{n=1}^{N}  \left(\mA_{\theta_k} \vu_k^{(n)} - \vb\right)^{\top} \mG^{-1}\mA_{\theta_k} \vu_k^{(n)} + \sqrt{\frac{2\gamma}{N}} \, \boldsymbol{\zeta}_{k+1}^{(0)}\label{eq:ipla-inverse-theta}\\
    \vu_{k+1}^{(n)} &= (1-\gamma)\vu_k^{(n)} + \gamma \mP_{\vu}[\theta_k] \left( \mA_{\theta_k}^{\top} \mG^{-1} \vb + \mH^\top \mR^{-1} \vy\right) + \sqrt{2\gamma} \, \mP_{\vu}[\theta_k]^{1/2} \boldsymbol{\zeta}_{k+1}^{(n)}\label{eq:ipla-inverse-u}
\end{align}
for all $n\in\left\{1, \dots, N\right\}$, with $\boldsymbol{\zeta}_k^{(0)} \sim \mathcal{N}(0,1)$, and $\{\boldsymbol{\zeta}_k^{(n)}\}_{n=1}^N$ as i.i.d. standard $n_u$-dimensional Gaussians. The factorisation $\mA_\theta = e^\theta \mA$ allows us to compute the gradient $\nabla_\theta\Psi_\theta^y$ for \eqref{eq:ipla-inverse-theta} analytically. For problems without such a factorisation, we could use automatic differentiation to compute $\frac{\partial}{\partial \theta}\mA_\theta$ for use in the gradient update steps.

\section{Convergence analysis}
\label{sec:theoretical-results}
In this section, we provide convergence analysis of some of the methods developed in this paper. The \gls*{IPLA} method is guaranteed to convergence if the joint potential is strongly convex and globally gradient-Lipschitz continuous. We show below that for the example of elliptic \gls*{PDE} forcing estimation, we can \textit{prove} that these conditions hold and, as a result, the \gls*{IPLA} algorithm will converge to the \gls*{MMAP} estimate.

\subsection{Convergence of elliptic PDE forcing estimate to MMAP estimator}
\label{subsec:convergence_forcing_MMAP}
Recall that we consider the setting in section~\ref{subsec:linear-poisson} and in this case, we have the potential $\Psi^y_{\theta}(\vu, \vb)$ for the joint system $\vv = (\vu, \vb)$ conditioned on the data $\vy$. We can write the gradients of the potential w.r.t. the solution, $\vu$, and parameters $\vb$, are as follows
\begin{align}
    \nabla_{\vu} \Psi^y_{\theta}(\vu, \vb) &= \mA_{\theta}^{\top}\mG^{-1}\left(\mA_{\theta} \vu - \vb\right) + \mH^{\top}\mR^{-1}\left(\mH \vu - \vy\right), \label{eq:poisson-grad-u}\\
    \nabla_{\vb} \Psi_{\theta}^y(\vu, \vb) &= -\mG^{-1}\left(\mA_{\theta} \vu - \vb\right) + \boldsymbol{\Sigma}^{-1}(\vb - \boldsymbol{\mu}).\label{eq:poisson-grad-b} 
\end{align}
For this problem we have a quadratic potential, and so the strong convexity and Lipschitz constants are determined by the minimum and maximum eigenvalues of the Hessian:
\begin{align}\label{eq:hessian-poisson-forcing}
\nabla^2 \Psi_{\theta}^y = \begin{bmatrix}
        \mA_{\theta}^{\top}\mG^{-1}\mA_{\theta} + \mH^{\top}\mR^{-1}\mH & -\mA_\theta^{\top}\mG^{-1}\\
        -\mG^{-1}\mA_{\theta}  & \mG^{-1} + \boldsymbol{\Sigma}^{-1}
    \end{bmatrix}.   
\end{align}
Provided the Hessian is positive-definite, we have $\mu>0$, and a unique minimiser. We can indeed prove this result for this particular problem.
\begin{lemma}\label{lem:assump_satisfied} The potential $\Psi_{\theta}^y(\vu, \vb)$ for the linear-Poisson equation is $\mu$-strongly convex in $(\vu, \vb)$, with $\mu = \lambda_{\min}(\nabla^2\Psi_{\theta}^y) > 0$.
\end{lemma}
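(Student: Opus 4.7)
Since the joint negative log-density $\Psi_\theta^y(\vu,\vb)$ arises from a product of Gaussian densities in the variable $(\vu,\vb)$ (the statFEM prior~\eqref{eq:statfem-prior-elliptic}, the Gaussian likelihood, and the Gaussian forcing prior), it is a quadratic function of $(\vu,\vb)$ and its Hessian $\nabla^2 \Psi_\theta^y$ is a \emph{constant} matrix, namely the block matrix in~\eqref{eq:hessian-poisson-forcing}. Strong convexity with constant $\mu$ is therefore equivalent to $\nabla^2 \Psi_\theta^y \succeq \mu I$, and the largest such $\mu$ is exactly $\lambda_{\min}(\nabla^2 \Psi_\theta^y)$. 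The entire task thus reduces to proving that this Hessian is strictly positive definite.

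The plan is to evaluate the associated quadratic form directly and recognise a completed square. Writing $Q(\vu,\vb) := (\vu,\vb)^{\top} \nabla^2 \Psi_\theta^y (\vu,\vb)$, straightforward expansion and regrouping of the four blocks in~\eqref{eq:hessian-poisson-forcing} yield
\begin{equation*}
Q(\vu,\vb) \;=\; (\mA_\theta \vu - \vb)^{\top} \mG^{-1} (\mA_\theta \vu - \vb) \;+\; \vu^{\top} \mH^{\top} \mR^{-1} \mH \vu \;+\; \vb^{\top} \boldsymbol{\Sigma}^{-1} \vb,
\end{equation*}
which is exactly the reassembly of the three contributions from $p(\vu \mid \vb, \theta)$, $p(\vy \mid \vu)$, and $p(\vb)$. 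Each summand is non-negative: $\mG^{-1}$ and $\boldsymbol{\Sigma}^{-1}$ are positive definite as inverses of the GP-noise and forcing-prior covariance matrices, while $\mH^{\top} \mR^{-1} \mH$ is positive semidefinite.

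For strict positivity I chain three implications: if $Q(\vu,\vb) = 0$, then (i) positive definiteness of $\boldsymbol{\Sigma}^{-1}$ forces $\vb = 0$; (ii) with $\vb = 0$, positive definiteness of $\mG^{-1}$ forces $\mA_\theta \vu = 0$; and (iii) invertibility of the stiffness matrix $\mA_\theta$, which holds by coercivity of $\mathcal{A}_\theta$ as noted after~\eqref{eq:statfem-prior-elliptic}, forces $\vu = 0$. Hence $\nabla^2 \Psi_\theta^y \succ 0$ and $\mu := \lambda_{\min}(\nabla^2 \Psi_\theta^y) > 0$. There is no substantive obstacle beyond the algebraic rearrangement; notably, no rank assumption on the observation operator $\mH$ is required, since the forcing prior $\boldsymbol{\Sigma}^{-1}$ together with invertibility of $\mA_\theta$ already eliminates any null direction in the joint variable.
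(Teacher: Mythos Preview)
Your proof is correct and arguably more elementary than the paper's. The paper proceeds via the Schur complement criterion: it first checks that the lower-right block $\mG^{-1}+\boldsymbol{\Sigma}^{-1}$ is positive definite, then computes the Schur complement explicitly, simplifying it with the Woodbury identity to $\mA_\theta^\top[\mG+\boldsymbol{\Sigma}]^{-1}\mA_\theta + \mH^\top\mR^{-1}\mH$, and finally shows this is positive definite using invertibility of $\mA_\theta$. You instead expand the quadratic form directly, complete the square to obtain the three-term decomposition, and argue by a short null-space chase that it vanishes only at the origin. Both approaches hinge on the same two structural facts---positive definiteness of the prior precision $\boldsymbol{\Sigma}^{-1}$ and invertibility of $\mA_\theta$---and neither needs any rank condition on $\mH$. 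The Schur-complement route has the minor advantage of producing an explicit reduced matrix (and makes transparent why dropping the prior leaves only $\mH^\top\mR^{-1}\mH$, hence only semidefiniteness), while your direct approach avoids the Woodbury manipulation entirely and is shorter.
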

\begin{proof} See Appendix~\ref{app:strong-convexity}.
\end{proof}
 We also show that without the prior, considering only the \gls*{MMLE} potential $\Phi_\theta^y$, we can only guarantee $\mu = \lambda_{\min}(\nabla^2\Phi_\theta^y)\geq 0$ and therefore is only convex, not \textit{strongly} convex. This motivates the use of the prior on the forcing function to ensure identifiability and guarantee a unique minimiser. 
 For this linear case, we can determine this minimiser analytically and we derive this in Appendix \ref{appendix:map-solution}. 

\begin{theorem}\label{thm:convergence_forcing} Assume $L := \lambda_{\max}(\nabla^2\Phi_\theta^y) < \infty$. Then, the parameter marginal $(\vb_k)_{k\geq 1}$ generated by the \gls*{IPLA} algorithm \eqref{eq:ipla-forcing-b}--\eqref{eq:ipla-forcing-u} with the step-size satisfying $\gamma \leq 2 / (\mu + L)$ for estimating the forcing function in the linear Poisson equation converges to the \gls*{MMAP} estimator $\vb^{\star} = \arg\max p(\vb | \theta, \vy)$. In particular, we have the following convergence result
\begin{align}\label{eq:parameter-convergence}
    \mathbb{E} \left[\left\Vert \vb_k - \vb^{\star} \right\Vert^2\right]^{1/2} \leq \sqrt{\frac{2 n_u}{N \mu}} + (1 - \mu \gamma)^{k} W_2(\nu_0, \tilde{\pi}) + 1.65 \frac{L}{\mu} \sqrt{\frac{(N+1)n_u}{N}} \gamma^{1/2},
\end{align}
where $\nu_0$ is the initial distribution of the particle system so that $W_2(\nu_0, \tilde{\pi}) < \infty$ where $\tilde{\pi}$ is given in the proof, $\mu$ is the strong convexity constant given in Lemma~\ref{lem:assump_satisfied}.
\end{theorem}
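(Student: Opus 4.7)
The plan is to reduce the statement to the general non-asymptotic IPLA convergence result of~\cite{akyildiz2025Interacting} by verifying its two standing hypotheses (strong convexity and Lipschitz continuity of the gradient of the joint potential) in our setting, and then to split the parameter error by the triangle inequality into a finite-particle statistical bias and a sampling error, the latter further decomposing into a contraction term and an Euler--Maruyama discretisation term.

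First I would verify the assumptions. Strong convexity of $\Psi_\theta^y(\vu,\vb)$ with constant $\mu > 0$ is already supplied by Lemma~\ref{lem:assump_satisfied}. Global Lipschitz continuity of $\nabla \Psi_\theta^y$ with constant $L$ follows from the hypothesis $\lambda_{\max}(\nabla^2\Phi_\theta^y) < \infty$, since the Gaussian prior terms in $\Psi_\theta^y = \Phi_\theta^y - \log p(\vb) - \log p(\theta)$ contribute only bounded quadratic curvature that can be absorbed into $L$.

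Next I would identify the invariant measure $\tilde\pi$ of the continuous-time IPLA system \eqref{eq:ipla-sde-b}--\eqref{eq:ipla-sde-u} on $\mathbb{R}^{(N+1) n_u}$. A direct Fokker--Planck check, using that $\vb$ has diffusion coefficient $1/N$ while each $\vu^{(n)}$ has diffusion $1$, gives
\begin{equation*}
\tilde\pi(\vu^{(1:N)}, \vb) \propto \exp\Bigl(-\sum_{n=1}^N \Psi_\theta^y(\vu^{(n)}, \vb)\Bigr),
\end{equation*}
whose $\vb$-marginal is $\tilde\pi_\vb(\vb) \propto p(\vb \mid \vy)^N$. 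The triangle inequality then gives
\begin{equation*}
\mathbb{E}\bigl[\|\vb_k - \vb^{\star}\|^2\bigr]^{1/2} \leq \mathbb{E}\bigl[\|\vb_k - \tilde\vb\|^2\bigr]^{1/2} + \mathbb{E}\bigl[\|\tilde\vb - \vb^{\star}\|^2\bigr]^{1/2},
\end{equation*}
where $\tilde\vb \sim \tilde\pi_\vb$ is coupled synchronously to $\vb_k$. Since $\tilde\pi_\vb$ is $N\mu$-strongly log-concave around the mode $\vb^{\star}$, a Brascamp--Lieb-type variance bound gives $\mathbb{E}[\|\tilde\vb - \vb^{\star}\|^2] \leq 2 n_u/(N\mu)$, producing the first term of~\eqref{eq:parameter-convergence}. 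The remaining distance from the sampling error is handled by the standard discretised strongly-log-concave Langevin analysis in the joint $(N+1) n_u$-dimensional space: synchronous coupling yields geometric contraction $(1-\mu\gamma)^k W_2(\nu_0, \tilde\pi)$ for $\gamma \leq 2/(\mu + L)$, and a Dalalyan-style Euler--Maruyama bound produces the $1.65\,(L/\mu)\sqrt{(N+1) n_u/N}\,\gamma^{1/2}$ term; marginalising the joint $W_2$ bound onto $\vb$ preserves both estimates since the marginal Wasserstein distance is dominated by the joint one.

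The main obstacle I expect is handling the heterogeneous diffusion coefficients consistently throughout: both the one-step contraction and the local Euler--Maruyama error must be computed in the metric weighted by the diffusion matrix so that the $1/\sqrt{N}$ scaling of the $\vb$-noise is correctly propagated into the constants. The sharp factor $\sqrt{(N+1) n_u/N}$ in the discretisation term arises only once this weighting is tracked; without it one obtains the weaker $\sqrt{(N+1) n_u}$, which is of the wrong order in $N$. Accumulating the one-step errors over $k$ iterations via the usual Gr\"onwall-type argument is then routine.
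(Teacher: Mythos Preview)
Your overall strategy---identify the invariant measure, split the error by the triangle inequality into a concentration term and a Langevin sampling term, then control each separately---matches the paper's. The concentration bound and the final assembly are essentially the same, and your remark about absorbing the prior curvature into $L$ is a valid reading of the paper's slightly loose notation.

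The substantive difference is how the Langevin term is handled. You propose to work in the original $(\vu^{(1:N)},\vb)$ coordinates and track a diffusion-weighted metric so that the heterogeneous noise ($\sqrt{2/N}$ on $\vb$, $\sqrt{2}$ on each $\vu^{(n)}$) is correctly propagated; you flag this as the main obstacle. The paper sidesteps it entirely with the change of variables $\vp^{(n)} = \vu^{(n)}/\sqrt{N}$ together with the step-size rescaling $\tilde\gamma = \gamma/N$. Under this transformation the IPLA recursion becomes a \emph{standard} ULA on $\mathbb{R}^{(N+1)n_u}$ with homogeneous diffusion, potential $\sum_{n}\Psi_\theta^y(\sqrt{N}\vp^{(n)},\vb)$, strong-convexity constant $\tilde\mu = N\mu$, Lipschitz constant $\tilde L = NL$, and step-size $\tilde\gamma$. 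Theorem~1 of~\cite{dalalyan2019user} then applies as a black box, and the cancellations $\tilde\mu\tilde\gamma = \mu\gamma$, $\tilde L/\tilde\mu = L/\mu$, $\tilde\gamma^{1/2} = (\gamma/N)^{1/2}$ produce the factor $\sqrt{(N+1)n_u/N}$ automatically. Your weighted-metric route should yield the same constants, but the rescaling trick is the clean way to get them without reproving Dalalyan's bound in a non-standard metric.
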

\begin{proof}
See Appendix~\ref{proof:thm:convergence_forcing}.
\end{proof}
We experimentally confirm the result in Theorem~\ref{thm:convergence_forcing}, by estimating the order of convergence of the \gls*{IPLA} estimate to the \gls*{MMAP} estimate in section \ref{subsec:forcing-linear-results}.
A few remarks to fully unpack this result are in order.
\begin{remark}\label{rem:complexity} Given Theorem~\ref{thm:convergence_forcing}, we can obtain complexity guarantees for our method to estimate the forcing. Let ${\kappa} = L/\mu$ denote the condition number. We start by requiring $N = \mathcal{O}(\varepsilon^{-2} n_u)$ which makes the first term in \eqref{eq:parameter-convergence} of order $\varepsilon$, i.e., $\mathcal{O}(\varepsilon)$. Next, we set $\gamma = \mathcal{O}(\varepsilon^2 n_u^{-1} \kappa^{-2})$ which makes the last term in \eqref{eq:parameter-convergence} of order $\mathcal{O}(\varepsilon)$. Finally, setting $k \geq \widetilde{\mathcal{O}}(n_u \kappa^2 \varepsilon^{-2})$ provides\footnote{The notation $\widetilde{\mathcal{O}}$ suppresses logarithmic factors.} an error so that $\mathbb{E} \left[\left\Vert \vb_k - \vb^{\star} \right\Vert^2\right]^{1/2} \leq \varepsilon$. \hfill $\diamond$
\end{remark}
Here, we note the linear dependence of the complexity on the latent dimension $n_u$, that is for lower latent dimension (i.e. for coarser \gls*{FEM} discretisation) the complexity will reduce, which may appear counter-intuitive. This is a result of bounding the error between the \gls*{MMAP} estimate $\vb^\star$ and the \gls*{IPLA} estimate $\vb_k$; with $n_u$ too small, the resulting \gls*{MMAP} estimate will not necessarily be close to the true forcing $\vb$ due to the large \gls*{FEM} discretisation error incurred. We point readers towards the works~\cite{karvonen2022error,papandreou2023Theoretical}, which provide error bounds for \gls*{statFEM} between the true solution and the \gls*{statFEM} distributions, in terms of the \gls*{FEM} discretisation.
\begin{remark} Remark~\ref{rem:complexity} makes it clear that the number of iterations $k$ to obtain $\varepsilon$-accuracy is strongly influenced by $\kappa$. It is thus crucial to control this number for ill-conditioned problems, as in our case. The algorithm in \eqref{eq:forcing-preconditioning-b}--\eqref{eq:forcing-preconditioning-u} is \textit{preconditioned} and this can potentially improve our bound. Specifically, with a transformation of variables $\mathbf{w}\coloneqq \mP^{-1/2} \vv$, it can be shown via It\^{o}'s formula that an equivalent \gls*{SDE}, with potential $\Psi^y_{\theta,\mP}(\mathbf{w})\coloneqq \Psi_{\theta}^y(\vv) = \Psi^y(\mP^{1/2}\mathbf{w})$ has the same invariant measure \cite{akyildiz2022statistical}. We can then analyse the Lipschitz and strong convexity constants in the transformed coordinate system, using the same method as in section~\ref{subsec:convergence_forcing_MMAP} now that the \gls*{SDE} is of the same form. The Hessian for the transformed potential $\nabla_{\mathbf{w}}^2\Psi^y_{\theta,\mP} (\mathbf{w}) = \mP^{1/2}\nabla_{\vv}^2\Psi_{\theta}^y(\vv)\mP^{1/2}$, and so the constants relating to the preconditioned system are
\begin{align}
    \mu_{\mP} = \lambda_{\mathrm{min}}\left(\mP^{1/2}\nabla^{2}\Psi^{y}_{\theta}\mP^{1/2}\right),
    \quad 
    L_{\mP} = \lambda_{\mathrm{max}}\left(\mP^{1/2}\nabla^{2}\Psi_{\theta}^{y}\mP^{1/2}\right).
\end{align}
The bound in Theorem~\ref{thm:convergence_forcing} then applies directly with the constants $\mu_{\mP}$ and $L_{\mP}$. Let $\kappa_{\mP} = L_{\mP}/\mu_{\mP}$ be the condition number of the preconditioned system. Given Remark~\ref{rem:complexity} and assuming that $\kappa_{\mP} \ll \kappa$, preconditioning has two clear effects: First, the step-size $\gamma = \mathcal{O}(\varepsilon^2 n_u^{-1} \kappa_{\mP}^{-2})$ can now be bigger (within the restrictions provided in Theorem~\ref{thm:convergence_forcing}). Secondly, to obtain $\varepsilon$-accuracy, we may require significantly fewer steps, i.e., $k = \widetilde{\mathcal{O}}(n_u \kappa_{\mP}^2 \varepsilon^{-2})$. \hfill $\diamond$
\end{remark}

\section{Experimental Results}
\label{sec:experimental-results}
In this section, we provide comprehensive numerical evidence that the methods we presented above are effective in practice. We consider two examples, with different levels of complexity, to demonstrate the effectiveness of the \gls*{IPLA} algorithm within the setting of inverse problems. Specifically, we consider the following examples:
\begin{itemize}
    \item \textbf{Linear Poisson forcing estimation:} We include this example because both the exact posterior, and the exact \gls*{MMAP} estimator are available analytically due to the linearity of the problem. This allows us to experimentally confirm the convergence rate of our \gls*{IPLA} estimate of the forcing to the ground truth \gls*{MMAP} estimator. We also assess the stability of the sampler with varying time-discretisation, and with and without preconditioning.
    \item \textbf{Linear Poisson inverse problem:} We consider the estimation of an unknown constant diffusivity for the linear Poisson problem, demonstrating the effect of a warm-start on the efficiency of estimating the parameter.
\end{itemize}
\subsection{Forcing estimation for Linear Poisson PDE}
\label{subsec:forcing-linear-results}
Here we consider two example domains, the unit interval, and a 2D disc. In both cases we fix the diffusivity to a constant value of one, i.e. $\varkappa=1$. 
For the unit interval the linear Poisson \gls*{PDE} is given by
\begin{align}\label{eq:pde1}
    -\nabla \cdot (\nabla u(x)) &= f + \epsilon, \quad x\in (0,1), \quad \epsilon \sim \mathcal{GP}(0, k)\nonumber\\
    u(x) &= 0, \quad x\in\left\{0,1\right\},
\end{align}
where $f(x) = 5\sin (6\pi x)$, $k(x,x') = \exp\left(-\frac{\Vert x - x'\Vert^2_2}{2 * 0.1^2}\right)$, and for the forcing prior $f\sim\mathcal{GP}(0,k_f)$, $k_f(x,x') = 4\exp\left(-\frac{\Vert x - x'\Vert^2_2}{2 * 0.1^2}\right)$. 
For the disc, $\Omega = \left\{\vx \in \mathbb{R}^2 \colon \Vert\vx\Vert < 1 \right\}$, we have:
\begin{align}\label{eq:pde2}
    -\nabla \cdot (\nabla u(\vx)) &= f + \epsilon, \quad \vx\in\Omega, \quad \epsilon \sim \mathcal{GP}(0, k)\nonumber\\
    u(\vx) &= 0, \quad \vx\in\partial\Omega,
\end{align}
with $f(\vx) = 100\exp\left(-\frac{\Vert \vx + \vm\Vert^2_2}{2 * 0.2^2}\right)+100\exp\left(-\frac{\Vert \vx - \vm\Vert^2_2}{2 * 0.2^2}\right)$, $\vm=[0.4,0.4]^\top$, and $k(\vx,\vx') = \exp\left(-\frac{\Vert \vx - \vx'\Vert^2_2}{2 * 0.1^2}\right)$. For inference $f\sim\mathcal{GP}(0,k_f)$, $k_f(\vx,\vx') = 100\exp\left(-\frac{\Vert \vx - \vx'\Vert^2_2}{2 * 0.2^2}\right)$. For the examples in this work we assume the \gls*{GP} hyperparameters, such as kernel length-scale and amplitude are known \textit{a priori}. For the forcing prior hyperparameters, these would be set by the practitioner to reflect the \textit{a priori} knowledge. The \gls*{GP} misspecification hyperparameters would ideally be estimated from data. In this study we do not include such estimation, which could be performed by e.g. extending the \gls*{MMLE}/\gls*{MMAP} to include the kernel hyperparameters, via grid search, or Bayesian optimisation methods; we leave this as a consideration for future work.

\subsubsection*{Convergence results}
The convergence of the estimated forcing to the \gls*{MMAP} estimator is computed by assuming a convergence of the form $\Vert \vb^{(N)}_k - \vb^\star \Vert_2 = CN^{-p}$, where $\vb^{(N)}_{k}$ is the parameter estimate after $k$ iterations of preconditioned \gls*{IPLA} \eqref{eq:forcing-preconditioning-b}--\eqref{eq:forcing-preconditioning-u} using $N$ particles.
The \textit{rate} of convergence, $C$ will depend on the problem and is not the target of this analysis while the \textit{order} of convergence, $p$ is of interest and can be numerically approximated. The logarithm of the error is given by $\log \left( \Vert \vb_k^{(N)} - \vb^\star \Vert\right) = \log C - p\log N$. 
\begin{figure}[ht!]
    \centering
    \subfloat[]{\includegraphics[width=0.5\textwidth]{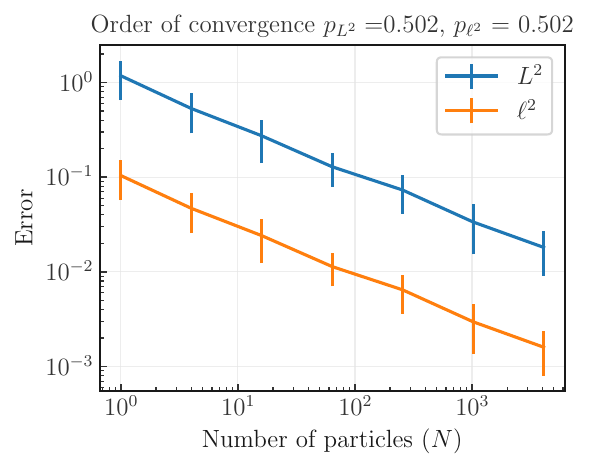}}
    \subfloat[]{\includegraphics[width=0.5\textwidth]{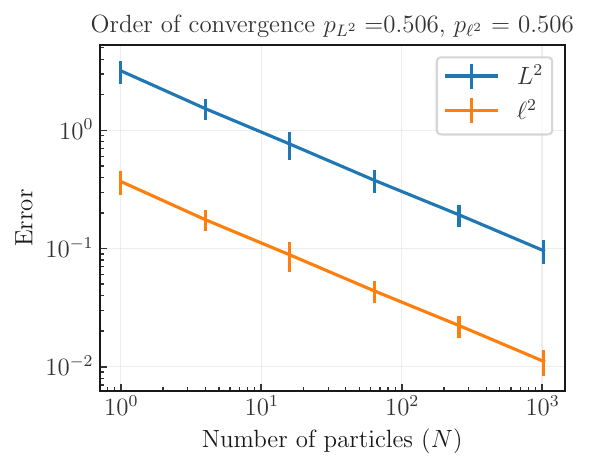}}
    \caption{Experimental validation of the order of convergence derived in Theorem \ref{thm:convergence_forcing} for geometries (a) 1D Poisson and (b) Poisson on 2D disc. Theorem \ref{thm:convergence_forcing} proves an error bound on the \gls*{MMAP} estimator, we compute this error for increasing number of particles $N$, showing $N^{-1/2}$ order of convergence. We refer to $\Vert \vb_k - \vb^\star \Vert$, $\Vert f_k - f^\star \Vert$ as the $\ell^2$-error and $L^2$-error, respectively. Error bars computed using 10 Monte Carlo simulations.}
    \label{fig:map_convergence}
\end{figure}
For example, consider the case where we double $N$ and take the log-ratio of the errors and take the difference of the logarithms: 
    $\log_2 \left( \Vert \vb_k^{(N)} - \vb^\star \Vert\right) -  \log_2\left(\Vert \vb_k^{(2N)} - \vb^\star \Vert\right) = p.$
Running the algorithm multiple times on randomised datasets gives us an average estimate of $\Vert \vb_k^{(N)} - \vb^\star \Vert$ for a particular value of $N$. We also compute the $L^2$-error between the FEM interpolated functions $\Vert f_k - f^\star \Vert$, as a comparison. The order of convergence for two examples (1D Poisson on unit interval, 2D Poisson on disc) can be seen in Figure~\ref{fig:map_convergence} and are estimated to be $(0.502, 0.506)$ are close to $1/2$, which is the theoretical order of convergence derived in Theorem \ref{thm:convergence_forcing}. We should also see the particle estimate of the posterior converge to the true posterior, $p(\vu|\vy,\theta,\vb^{\star})$; to assess this we compute the variance of the empirical estimate and compare this against the true posterior variance for increasing numbers of particles, which we visualise in Figure~\ref{fig:variance} for the 1D Poisson example \eqref{eq:pde1}.
\begin{figure}
    \centering
    \includegraphics[width=\textwidth]{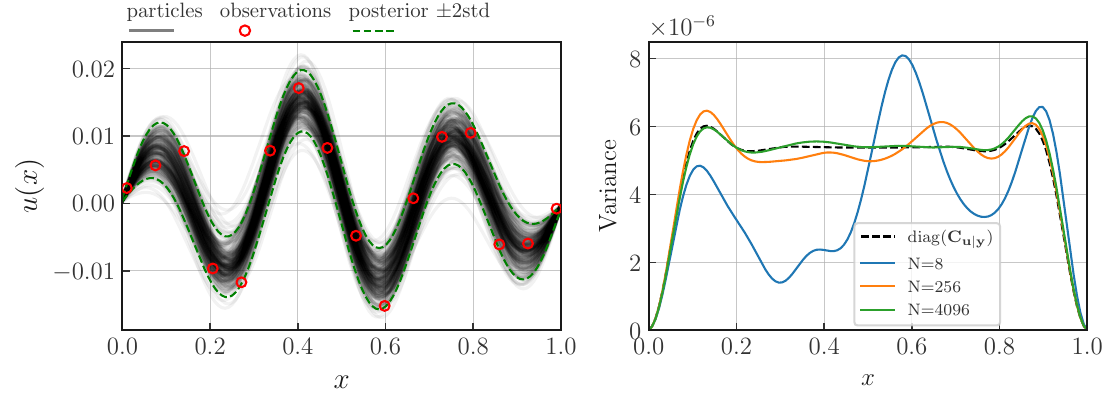}
    \caption{(\textit{left}) True posterior $p(\vu|\vy,\theta,\vb^\star) = \mathcal{N}(\vm_{\vu|\vy}, \mC_{\vu|\vy})$, compared to particle approximated posterior $p(\vu|\vy,\vb_k) = \frac{1}{N}\sum_{n=1}^{N} \delta_{\vu_{k}^{(n)}}$, for $N=256$. (\textit{right}) Variance of true posterior vs approximations for increasing number of particles.}
    \label{fig:variance}
\end{figure}
\subsubsection*{Robustness}
Ideally, the sampling should be numerically stable and resulting inference independent of the degree of mesh refinement for the problem; this corresponds to keeping the dimension of the data $n_y$ fixed, whilst increasing the latent dimension $n_u$. We have found that increasing the dimension of the latent space makes the problem more ill-posed, particularly when specifying a covariance length-scale that is larger than the \gls*{FEM} node separation making the condition number of the forcing error covariance very large. To demonstrate this, we ran the standard \gls*{ULA} algorithm, without preconditioning for the 1D linear Poisson problem \eqref{eq:pde1}, where for $\Omega = [0,1]$ we specify nodes $x_i=\frac{i}{n_u-1}, i\in\left\{0,\dots,n_u-1\right\}$, with piecewise linear basis functions, giving $\vu \in \mathbb{R}^{n_u}$. Increasing the dimension $n_u$ of the state-space and finding the maximum step-size before the sampler diverges gives an indication of the numerical stability of the sampler, results for this are shown in Table \ref{tab:mesh-refinement-ULA}.
\begin{table}[ht!]
	\centering
	\caption{Maximum step-sizes for a stable sampler for \gls*{IPLA} \textit{without} preconditioning, for increasing mesh refinement and different \gls*{statFEM} prior length-scales, $\ell$, where $k(x, x') = \sigma\exp(-\frac{\Vert x-x' \Vert^2}{ 2\ell^2})$. Fixed dataset size $n_y=16$ used, for 1D Poisson on unit interval domain.}
	\begin{tabular}{cccccc}  
		$n_u$ & 5 & 10 & 20 & 30 & 50 \\
		\midrule
		max $\gamma$ ($\ell=0.01$)    & $ 10^{-4.78} $    &  $ 10^{-5.64} $  & $ 10^{-6.87} $ & $ 10^{-7.58} $ & $ 10^{-8.61} $ \\
		\midrule
		max $\gamma$ ($\ell=0.1$)  & $ 10^{-4.78} $    &  $ 10^{-6.94} $  & $ <10^{-10} $ & $ <10^{-10} $ & $ <10^{-10} $ \\
		\bottomrule
	\end{tabular}
	\label{tab:mesh-refinement-ULA}
\end{table}
\begin{table}[ht!]
\centering
\caption{Mesh-refinement results for preconditioned \gls*{IPLA}: Condition numbers for standard and preconditioned potential for increasing $n_u$.}
\begin{tabular}{cccccc}
   $n_u$            & 32                    & 64                    & 128                   & 256                   & 512                   \\
               \midrule
$\kappa$       & $1.14 \times 10^{10}$ & $2.30 \times 10^{10}$ & $4.60 \times 10^{10}$ & $9.20 \times 10^{10}$ & $2.83 \times 10^{11}$ \\
\midrule
$\kappa_{\mP}$ & 65.6                  & 65.6                  & 65.5                  & 65.3                  & 65.0                 \\
\bottomrule
\end{tabular}
\label{tab:condition-number}
\end{table}
Preconditioning the Langevin sampling mitigates this problem. To demonstrate the robustness of preconditioning to mesh-refinement, we calculate the condition number for the standard potential, $\kappa$, and the preconditioned potential, $\kappa_{\mP}$, for increasing latent dimension. Table \ref{tab:condition-number} shows the condition number increasing significantly for the standard potential, whilst the preconditioned case the condition number is small, and constant with respect to the latent dimension.

\subsection{Estimating Diffusivity Parameter for Linear Poisson PDE}
\label{subsec:diffusivity-estimation}
For the inverse problem of estimating the diffusivity, we use the 1D Poisson equation: 
\begin{align}\label{eq:pde3}
-\nabla \cdot (e^{\theta}\nabla u) &= f + \epsilon, \quad x\in(0,1),\nonumber\\
    u(x) &= 0, \quad x\in\{0,1\}.
\end{align}
where data is generated from $f(x) = 20\sin (4\pi x)$ and with $\varkappa = 1$. For inference we use $\epsilon \sim \mathcal{GP}(0, k)$, $k(x,x') = 3\exp\left(-\frac{\Vert x - x'\Vert^2_2}{2*0.02^2}\right)$, with a prior $p(\theta)=\mathcal{N}(1.5,0.75^2)$. Estimation results are shown in Figure \ref{fig:diffusivity-estimation}, showing convergence of the parameter estimate to the true value.
\begin{figure}[ht]
    \centering
    \includegraphics[width=\textwidth]{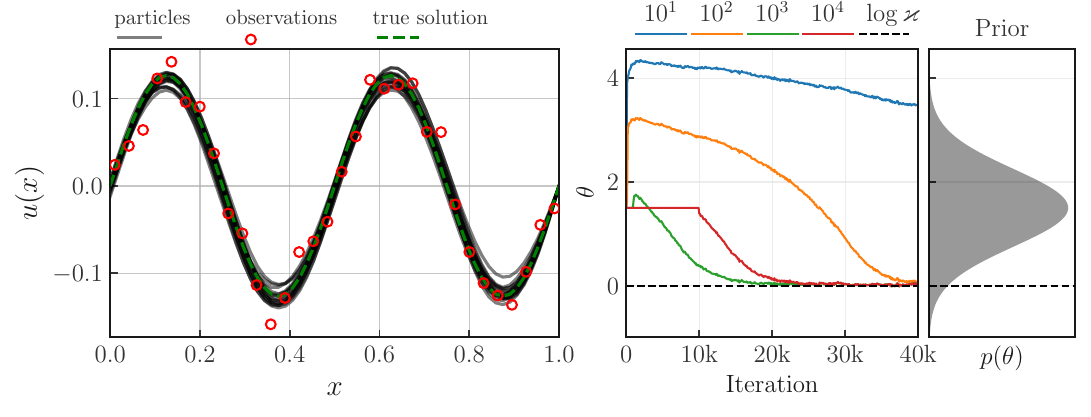}
    \caption{(\textit{left}) Particles $\{\vu_{k}^{(n)}\}_{n=1}^N$ plotted in black, alongside the solution $u(x)$ (dashed green) that generated the data $\vy$ (red circles), for warm-start length $10^3$, step-size $\gamma=0.001$. (\textit{center}) Comparison of parameter traces, for different lengths of warm-start compared to true value $\varkappa=1$, initalised at $\theta_0=1.5$. (\textit{right}) Prior distribution, $p(\theta) = \mathcal{N}(1.5, 0.75^2)$.}
    \label{fig:diffusivity-estimation}
\end{figure}
In this particular experiment, we use a warm-start strategy for $\mathbf{u}$ variables and found that without this warm-start, the algorithm might take a long time to converge. The reason for this can be seen from the gradient of the potential which reads as
\begin{align*}
\nabla_{\vu}\Psi^{y}_{\theta}(\vu, \vb) = \mA_{\theta}^{\top} \mG^{-1} \left(\mA_\theta \vu - \vb\right)+ \mH^{\top}\mR^{-1}\left(\mH \vu - \vy\right).
\end{align*}
This potential is ill-behaved in $\theta$-dimension with growing Lipschitz coefficients, creating instability. However, observing that, for an initial (fixed) point $\theta_0$, we have the strong convexity
\begin{align*}
    \left\langle \vu - \vu', \nabla_{\vu} \Psi^{y}_{\theta_0}(\vu, \vb) - \nabla_{\vu} \Psi^{y}_{\theta_0}(\vu', \vb) \right\rangle
    &\geq  \mu \Vert \vu - \vu' \Vert^2,
\end{align*}
and Lipschitz continuity
\begin{align*}
    \left\Vert \nabla_{\vu} \Psi^{y}_{\theta_0}(\vu, \vb) - \nabla_{\vu} \Psi^{y}_{\theta_0}(\vu', \vb) \right\Vert
    &\leq L \Vert \vu - \vu' \Vert, 
\end{align*}
where
\begin{align*}
    \mu = \lambda_{\mathrm{min}}\left(\mA_{\theta_0}^{\top} \mG^{-1} \mA_{\theta_0} + \mH^{\top}\mR^{-1}\mH\right)
    , \quad L = \lambda_{\mathrm{max}}\left(
    \mA_{\theta_0}^{\top} \mG^{-1} \mA_{\theta_0} + \mH^{\top}\mR^{-1}\mH \right).
\end{align*}
We can run the diffusion on $\mathbf{u}$ with fixed $\theta_0$ which will exhibit exponential convergence. Initialising $\mathbf{u}$ variables at stationarity for $\theta_0$ and running the system jointly produces a stable behaviour experimentally, which might be an interesting starting point of an analysis for this case. We refer to the iteration of the particle update \eqref{eq:ipla-inverse-u} while fixing $\theta_0$ constant a `warm-start', and the number of iterations for which this is run the `warm-start length'.
The particles are initialised at zero and the diffusivity parameter at some value larger than the truth. With shorter warm-starts, the parameter estimate increases rapidly from the initial value away from the true value because the particles have not had sufficient time to move away from zero, hence the parameter updates believe the system is highly diffusive. The second longest warm-start provides the quickest converge to the true diffusivity, with the longest spending too long in the warm-start stage. Figure \ref{fig:diffusivity-estimation} shows the parameter traces for these different lengths of warm-start, alongside the prior over the parameter $\theta$.

\section{Nonlinear Extension}
\label{sec:nonlinear}
As we have seen in section \ref{subsec:statfem-linear}, if the \gls*{PDE} is defined by a linear differential operator the prior distribution induced by the \gls*{GP} noise is also a Gaussian. This is not the case in general. A nonlinear differential operator will induce a (typically non-analytic) non-Gaussian prior over the \gls*{FEM} coefficients. In this section we derive the \gls*{statFEM} prior for a nonlinear \gls*{PDE}, before deriving the potential and approximations for a nonlinear Poisson \gls*{PDE}, before demonstrating the forcing estimation results.

\subsection{Statistical Finite Elements for Nonlinear PDEs}
\label{subsec:statfem-nonlinear}
 A stochastic \gls*{PDE} with the nonlinear differential operator $\cF$ and forcing $f$
\begin{align}\label{eq:nonlinear-PDE}
    \cF(u) = f + \epsilon, \quad
  \epsilon \sim \mathcal{GP}(0, k(\cdot, \cdot)),
\end{align}
can be discretised via the \gls*{statFEM} method giving a system of nonlinear algebraic equations $\mF(\vu) = \vb + \ve, \quad \ve \sim \mathcal{N}(\mathbf{0}, \mG)$ (see Appendix \ref{appendix:nonlinear-pde} for derivation).
The induced prior distribution can be found via a transformation of variables \cite{billingsley1995measure}
, from distribution $\mathbf{E}\sim f_{\mathbf{E}}(\ve)$ to $\mathbf{U}\sim f_{\mathbf{U}}(\vu)$ where $\vu = g(\ve)$,
\begin{align}\label{eq:transformation-of-variables}
    f_{\mathbf{U}}(\vu) = \det \left\vert \frac{\partial g^{-1}(\vu)}{\partial \vu} \right\vert f_{\mathbf{E}}(g^{-1}(\vu)).
\end{align}
Using this transformation of variables with $f_{\mathbf{E}} = \mathcal{N}(\mathbf{0},\mG)$ and $g^{-1}(\vu) = \mF(\vu) - \vb$, we can see how the resulting density of the solution depends on the nonlinear differential operator and its Jacobian:
\begin{align}\label{eq:nonlinear-statfem-prior}
    p(\vu | \vb) = \det \vert \mJ (\vu) \vert 
    \frac{1}{Z}
    \exp \left\{-\frac{1}{2}\left(\mF(\vu) - \vb\right)^{\top}\mG^{-1}\left(\mF(\vu) - \vb\right)\right\},
\end{align}
where the constant $Z = (2\pi)^{-n_u/2} \det\vert \mG \vert^{-1/2}$, and the Jacobian of the nonlinear transform $\mF$ is denoted $\mJ(\vu) = \frac{\partial \mF}{\partial \vu}\big\vert_{\vu}$. This prior, in general, is non-Gaussian due to the Jacobian determinant term. 

\subsection{Forcing Estimation for Nonlinear Poisson PDE}
\label{subsec:nonlinear-poisson}
The strong form of a nonlinear Poisson equation is shown in Equation \eqref{eq:nonlinear-poisson}, where we introduce the nonlinearity through a solution-dependent diffusivity, $q(u)$: 
\begin{align}\label{eq:nonlinear-poisson}
-\nabla \cdot (q(u)\nabla u) = f + \epsilon.
\end{align}
We proceed by multiplying with a test function and integrating over the domain, before applying integration by parts to convert this to the weak form
\begin{align}\label{eq:nonlinear-weak-main}
    \int_{\Omega}q(u)\nabla u \cdot \nabla v\md x = \int_{\Omega}\left(f + \epsilon\right)v \md x. 
\end{align}
For full details see Appendix \ref{appendix:nonlinear-pde}. With a basis function expansion $u_h = \sum_{i=1}^{n_u}\hat{u}_{i}\phi_i$, this is rewritten in vector form
\begin{align}\label{eq:nonlinear-discrete}
    \mF(\vu) = \vb + \ve, \quad \ve \sim \mathcal{N}(\mathbf{0}, \mG), 
\end{align}
where we have $\vu = \left[\hat{u}_1, \hat{u}_2, \dots, \hat{u}_{n_u}\right]^{\top}$, $\left[\mF(\vu)\right]_i = \left\langle q(u_h)\nabla u_h , \nabla \phi_i\right\rangle$, $\left[\vb\right]_i = \langle f, \phi_i\rangle$, and $\left[\mG\right]_{ij} = \langle \phi_i, \langle k(\cdot, \cdot), \phi_j \rangle\rangle$. The prior potential we define as $\Phi(\vu, \vb) \coloneqq - \log p(\vu|\vb)$, and posterior potential $\Phi^y(\vu,\vb) \coloneqq \Phi(\vu,\vb) - \log p(\vy|\vu)$ using the same observation model as \eqref{eq:observation-model}. Finally, in the same manner as the linear Poisson case we use a \gls*{GP} prior for the forcing function, and define the joint potential as $\Psi^y(\vu,\vb) \coloneqq \Phi^y(\vu,\vb) - \log p(\vb)$.

In this nonlinear case the computation of the gradient of the prior potential with respect to $\vu$, $\nabla_{\vu} \Phi(\vu, \vb)$ is intractable due to the log-determinant Jacobian term in the score of the nonlinear \gls*{statFEM} prior, so we turn to approximations of this distribution to make progress. Approximating this distribution with a Gaussian alleviates the problem of intractability, providing a linear score computed using the approximate mean and covariance. We present here the algorithm for sampling based on a Gaussian approximation of the nonlinear \gls*{statFEM} prior $p(\vu|\vb_k) \approx \mathcal{N}(\vm_k, \mC_k)$:
\begin{align}
    \vb_{k+1} &= (1-\gamma)\vb_k + \gamma\mathbf{P}_{\vb}\left( \mG^{-1}\left(\frac{1}{N}\sum_{n=1}^{N}\mF(\vu_{k}^{(n)})\right) + \boldsymbol{\Sigma}^{-1}\boldsymbol{\mu}\right) + \sqrt{\frac{2\gamma}{N}} \mathbf{P}_{\vb}^{1/2} \zeta_{k+1}^{(0)}\label{eq:nonlinear-b}\\
\vu_{k+1}^{(n)} &= (1-\gamma)\vu_{k}^{(n)}  + \gamma \mathbf{P}_{\vu}\left( \mC_k^{-1}\vm_k + \mH^{\top}\mR^{-1}\vy\right) + \sqrt{2\gamma}\mathbf{P}_{\vu}^{1/2}\zeta_{k+1}^{(n)}, \quad \forall n \in \left\{1, \dots, N\right\}\label{eq:nonlinear-u}\\
  &\text{with}\quad \mathbf{P}_{\vb} = \left[\mG^{-1} + \boldsymbol{\Sigma}^{-1}\right]^{-1},
  \quad
  \mathbf{P}_{\vu} = \left[\mC_k^{-1} + \mH^{\top}\mR^{-1}\mH\right]^{-1}.
\end{align}
Note the gradient with respect to the forcing can be computed exactly because the potential is quadratic in $\vb$, i.e. $\nabla_{\vb} \Psi^{y}(\vu, \vb) = \mG^{-1}\left(\mF(\vu) - \vb\right)$. 
Simple approaches to approximating the distribution of nonlinearly transformed Gaussian random variables \cite{hendeby2007nonlinear} include a first-order Taylor approximation, the unscented transform, and an empirical approximation to the first two moments with a sample based approach (See Table \ref{tab:nonlinear-approximations}).
The update steps~\eqref{eq:nonlinear-b}--\eqref{eq:nonlinear-u} require the computation of the preconditioner at each iteration as the approximation to $p(\vu|\vb_k)$ is updated. This is a computationally expensive step, which requires the matrix inverse $\mC_k^{-1}$ requiring order $n_u^3$ operations, followed by the inverse $\left[\mC_k^{-1} + \mH^{\top}\mR^{-1}\mH\right]^{-1}$, again order $n_u^3$. We note that for first-order Taylor, we can compute the inverse of covariance matrix efficiently from the Jacobian by noting $\mC_k^{-1} = \mJ(\vm_k)^{\top}\mG^{-1} \mJ(\vm_k)$ which is only order $n_u^2$. For larger-scale problems when this may become infeasible to recompute the preconditioner at each iteration other methods can be used for preconditioning, since for the first-order Taylor method, the preconditioning computation is order $n_u^3$ but the approximation is order $n_u^2$, hence fixing this preconditioner for some iterations may have significant impacts on efficiency. For example, a fixed preconditioned could be used, computed at the prior mean, or the preconditioner could be recomputed after a fixed number of steps. These different preconditioning strategies are of interest in future work.

 We remark that all these methods require us to solve the nonlinear system, and the first-order Taylor method requires computation of the Jacobian. This procludes the use of the first-order Taylor method for non-differentiable nonlinearities, where we can only evaluate $\mF$ and not its Jacobian $\mJ$; for this case, fixed-point methods could be used to solve $\mF(\vu)=\vb$ and the unscented transform or Monte Carlo methods applied for approximating the covariance. For this example, since we have access to the Jacobian, we apply Newton's method for solving the system.
\begin{table}[ht]
\centering
\caption{Gaussian approximations to nonlinearly transformed Gaussian random variables: first-order Taylor (FOT), unscented transform (UT) and Monte Carlo (MC). FOT requires one solve followed by computation of the Jacobian, $\mJ(\hat{\vm}) = \frac{\partial \mF}{\partial \vu}\vert_{\hat{\vm}}$. UT requires $2n_u+1$ solves, details in Appendix \ref{appendix:ut}. MC requires $M$ solves, where $\tilde{\vb}^{(j)} \sim \mathcal{N}(\mathbf{b}, \mathbf{G})$.}
\begin{tabular}{llll}
      & Solve & Mean: $\vm$            &   Covariance: $\mC$                        \\
      \toprule
FOT &
  $\mF(\hat{\vm}) = \vb$ & $\hat{\vm}$
   &
  $\mJ(\hat{\vm})^{-1}\mG \mJ(\hat{\vm})^{-\top}$ \\
 \midrule 
 UT \small{$j\in \{-n_u,\dots,n_u\}$} & $\mF(\mathbf{u}^{(j)}) = \vs^{(j)}$              & $\sum_{j}w_{j} \vu^{(j)}$ & $\sum_{j} \hat{w}_j(\vu^{(j)} - \vm)(\vu^{(j)} - \vm)^{\top}$ \\
 \midrule
MC $j\in \{1,\dots,M\}$ &
  $\mF(\mathbf{u}^{(j)}) = \tilde{\vb}^{(j)}$&
  $\frac{1}{M} \sum_{j=1}^{M}\vu^{(j)}$ &
  $\frac{1}{M-1} \sum_{j=1}^{M}(\vu^{(j)} - \vm)(\vu^{(j)} - \vm)^{\top}$\\
  \bottomrule
\end{tabular}
\label{tab:nonlinear-approximations}
\end{table}
Details of each of these approximations are given in Appendix \ref{appendix:ut}.

\subsection{Nonlinear Poisson Experimental Results}
\label{subsec:forcing-nonlinear-results}
This example introduces a nonlinearity into the Poisson diffusivity, where we can compare different nonlinear approximation methods. Here, we compare the first-order Taylor, unscented transform and Monte Carlo approximations for approximating nonlinear transforms of Gaussian random variables.
For the nonlinear Poisson \gls*{PDE} \eqref{eq:nonlinear-poisson}, we introduce a known nonlinearity with a solution-dependent diffusivity $q(x) = 1+u(x)^2$:
\begin{align}\label{eq:pde4}
    -\nabla \cdot ((1+u(x)^2)\nabla u(x)) &= f + \epsilon, \quad x\in (0,1), \quad \epsilon \sim \mathcal{GP}(0, k)\nonumber\\
    u(0) &= 0,\, u(1) = 1,
\end{align}
where $f(x) = 10\sin (2\pi x)$, $k(x,x') = \exp\left(-\frac{\Vert x - x'\Vert^2_2}{2*0.02^2}\right)$ for model misspecification, and $f \sim \mathcal{GP}(0, k_f)$, with  $k_f(x,x') = 6\exp\left(-\frac{\Vert x - x'\Vert^2_2}{2*0.1^2}\right)$ for the forcing prior.
\begin{figure}[ht!]
    \centering
    \includegraphics[width=\textwidth]{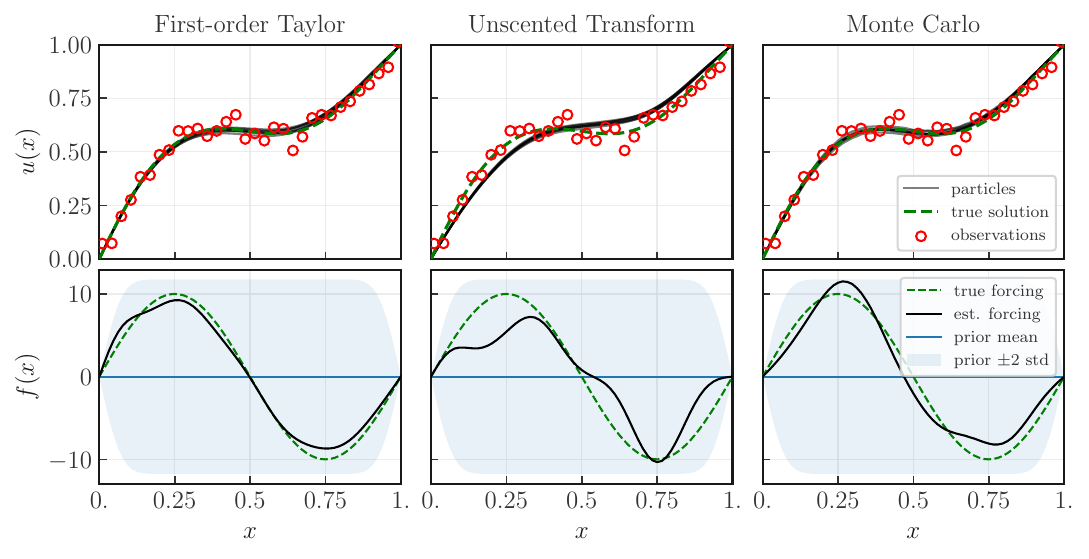}
    \caption{Comparison of forcing estimation results for different Gaussian approximations (See Table \ref{tab:nonlinear-approximations}) to the nonlinear \gls*{statFEM} prior \eqref{eq:nonlinear-poisson}. First-order Taylor approximation is the fastest to compute, and provides the best estimate of the true forcing function. Results shown at $k=10^4$ iterations, with $\gamma=0.005$ and $N=4$.}
    \label{fig:nonlinear-approximation-estimation}
\end{figure}
\begin{figure}[ht!]
    \centering
    \includegraphics[width=\textwidth]{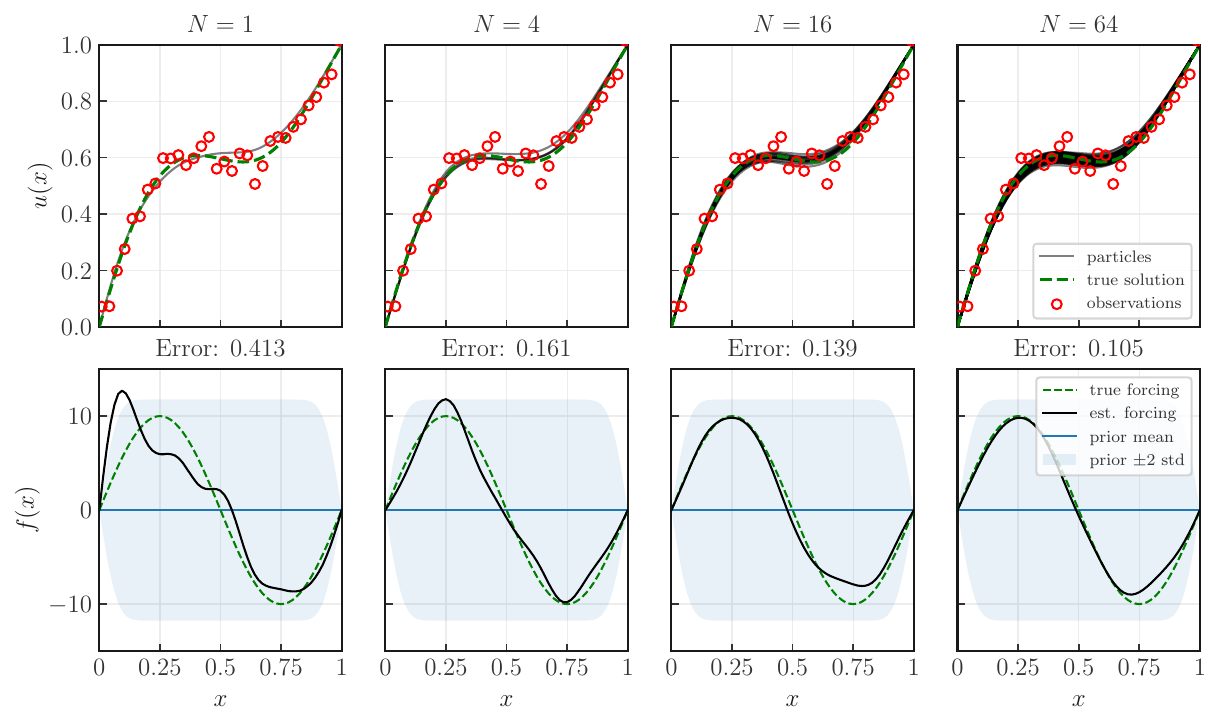}
    \caption{Estimation results for the first-order Taylor approximation to nonlinear \gls*{statFEM} prior for increasing number of particles $N$. (\textit{above}) Particles $\{\vu_{k}^{(n)}\}_{n=1}^N$ plotted in black, alongside the solution $u(x)$ (dashed green) that generated the data $\vy$ (red circles). (\textit{below}) We can see steady improvement of the forcing estimate $f_k(x) = \sum_{i=1}^{n_u}[\vb_k]_i \phi_i(x)$, to the true forcing function $f(x)=10\sin(2\pi x)$. Results shown at $k=10^4$ iterations, with $\gamma=0.005$.}
    \label{fig:nonlinear-forcing-convergence}
\end{figure}
We compare the three nonlinear approximations in Table \ref{tab:nonlinear-approximations}, showing parameter estimation and posterior inference results in Figure \ref{fig:nonlinear-approximation-estimation}. The first-order Taylor approximation has the smallest $L^2$-error between the estimated and true forcing, and is considerably less computationally expensive. Using this approximation method we assess the convergence of the forcing estimate to the true forcing with increasing particles (Figure \ref{fig:nonlinear-forcing-convergence}).

\section{Conclusions}
\label{sec:conclusions}
In this paper we have posed the problem of estimating \gls*{PDE} parameters as a \gls*{MMAP} problem, constructing \gls*{statFEM} generative model, to which we applied \gls*{IPLA}. 
Nonasymptotic bounds on the \gls*{MMAP} error have been proved, and experimentally confirmed for both 1D and 2D geometries. The approach in \cite{akyildiz2022statistical} has been extended to a {nonlinear} \gls*{statFEM} setting, and applied to the nonlinear Poisson \gls*{PDE}. Different methods for constructing Gaussian approximations to the non-Gaussian \gls*{statFEM} prior were considered and tested, with a particular focus on first-order Taylor approach. The effects of a 'warm-start' were explored for the case of solving the inverse problem of estimating diffusivity, showing that a warm-start can improve the convergence of the estimate. With nonasymptotic bounds derived, further research will consider scaling up this approach to work with larger systems, where solving the forward problem is prohibitively expensive --- in this case, Langevin dynamics can still be used to sample from the target density without forward solves \cite{akyildiz2022statistical}. Future work can also benefit from extensions to non-log-concave cases arising in nonlinear \glspl*{PDE} using the techniques introduced in \cite{zhang2023nonasymptotic,akyildiz2024nonasymptotic}.
Recent developments in particle-based schemes for accelerating the gradient descent are of interest in future work, such as the underdamped Langevin extension to \gls*{IPLA}~\cite{oliva2024kinetic}, the momentum-based approach of~\cite{lim2023momentum}, and the Stein variational gradient descent method of~\cite{sharrock2024tuning}.

\section*{Acknowledgements}
AGD was supported by Splunk Inc. [G106483] Ph.D scholarship funding. CD and MG were supported by EPSRC grant EP/T000414/1. MG was supported by a Royal Academy of Engineering Research Chair, and Engineering and Physical Sciences Research Council (EPSRC) grants EP/T000414/1, EP/W005816/1, EP/V056441/1, EP/V056522/1, EP/R018413/2, EP/R034710/1, and EP/R004889/1.

\noindent The data that support the findings of this study are available from the corresponding author, AGD, upon reasonable request.

\bibliography{references}

\begin{thebibliography}{51}
\providecommand{\natexlab}[1]{#1}
\providecommand{\url}[1]{\texttt{#1}}
\expandafter\ifx\csname urlstyle\endcsname\relax
  \providecommand{\doi}[1]{doi: #1}\else
  \providecommand{\doi}{doi: \begingroup \urlstyle{rm}\Url}\fi

\bibitem[Abarbanel et~al.(2009)Abarbanel, Creveling, Farsian, and
  Kostuk]{abarbanel2009Dynamical}
Henry D.~I. Abarbanel, Daniel~R. Creveling, Reza Farsian, and Mark Kostuk.
\newblock Dynamical {{State}} and {{Parameter Estimation}}.
\newblock \emph{SIAM Journal on Applied Dynamical Systems}, 8\penalty0
  (4):\penalty0 1341--1381, January 2009.
\newblock ISSN 1536-0040.
\newblock \doi{10.1137/090749761}.

\bibitem[Akyildiz and Sabanis(2024)]{akyildiz2024nonasymptotic}
O~Deniz Akyildiz and Sotirios Sabanis.
\newblock Nonasymptotic {{Analysis}} of {{Stochastic Gradient Hamiltonian Monte
  Carlo Under Local Conditions}} for {{Nonconvex Optimization}}.
\newblock \emph{Journal of Machine Learning Research}, 25\penalty0
  (113):\penalty0 1--34, 2024.

\bibitem[Akyildiz et~al.(2025)Akyildiz, Crucinio, Girolami, Johnston, and
  Sabanis]{akyildiz2025Interacting}
{\"O}~Deniz Akyildiz, Francesca~Romana Crucinio, Mark Girolami, Tim Johnston,
  and Sotirios Sabanis.
\newblock Interacting {{Particle Langevin Algorithm}} for {{Maximum Marginal
  Likelihood Estimation}}.
\newblock \emph{ESAIM: Probability and Statistics}, 2025.

\bibitem[Akyildiz et~al.(2022)Akyildiz, Duffin, Sabanis, and
  Girolami]{akyildiz2022statistical}
{\"O}mer~Deniz Akyildiz, Connor Duffin, Sotirios Sabanis, and Mark Girolami.
\newblock Statistical {{Finite Elements}} via {{Langevin Dynamics}}.
\newblock \emph{SIAM/ASA Journal on Uncertainty Quantification}, 10\penalty0
  (4):\penalty0 1560--1585, December 2022.

\bibitem[Atchad{\'e} et~al.(2017)Atchad{\'e}, Fort, and
  Moulines]{atchade2017perturbed}
Yves~F Atchad{\'e}, Gersende Fort, and Eric Moulines.
\newblock On {{Perturbed Proximal Gradient Algorithms}}.
\newblock \emph{Journal of Machine Learning Research}, 18\penalty0
  (1):\penalty0 310--342, 2017.

\bibitem[Atmadja and Bagtzoglou(2001)]{atmadja2001pollution}
Juliana Atmadja and Amvrossios~C Bagtzoglou.
\newblock Pollution {{Source Identification}} in {{Heterogeneous Porous
  Media}}.
\newblock \emph{Water Resources Research}, 37\penalty0 (8):\penalty0
  2113--2125, 2001.

\bibitem[Billingsley(1995)]{billingsley1995measure}
Patrick Billingsley.
\newblock \emph{Probability and {{Measure}}}.
\newblock John Wiley \& Sons., 1995.

\bibitem[Boyd and Vandenberghe(2004)]{boyd2004convex}
Stephen~P Boyd and Lieven Vandenberghe.
\newblock \emph{Convex {{Optimization}}}.
\newblock Cambridge university press, 2004.

\bibitem[Brenner(2008)]{brenner2008mathematical}
Susanne~C Brenner.
\newblock \emph{The {{Mathematical Theory}} of {{Finite Element Methods}}}.
\newblock Springer, 2008.

\bibitem[Caprio et~al.(2024)Caprio, Kuntz, Power, and
  Johansen]{caprio2024error}
Rocco Caprio, Juan Kuntz, Samuel Power, and Adam~M Johansen.
\newblock Error {{Bounds}} for {{Particle Gradient Descent}}, and
  {{Extensions}} of the {{Log-Sobolev}} and {{Talagrand Inequalities}}.
\newblock \emph{arXiv preprint arXiv:2403.02004}, 2024.

\bibitem[Celeux and Diebolt(1992)]{celeux1992stochastic}
Gilles Celeux and Jean Diebolt.
\newblock A {{Stochastic Approximation Type EM Algorithm}} for the {{Mixture
  Problem}}.
\newblock \emph{Stochastics: An International Journal of Probability and
  Stochastic Processes}, 41\penalty0 (1-2):\penalty0 119--134, 1992.

\bibitem[Dalalyan(2017)]{dalalyan2017theoretical}
Arnak~S Dalalyan.
\newblock Theoretical {{Guarantees}} for {{Approximate Sampling}} from
  {{Smooth}} and {{Log-Concave Densities}}.
\newblock \emph{Journal of the Royal Statistical Society. Series B (Statistical
  Methodology)}, pages 651--676, 2017.

\bibitem[Dalalyan and Karagulyan(2019)]{dalalyan2019user}
Arnak~S Dalalyan and Avetik Karagulyan.
\newblock User-{{Friendly Guarantees}} for the {{Langevin Monte Carlo}} with
  {{Inaccurate Gradient}}.
\newblock \emph{Stochastic Processes and their Applications}, 129\penalty0
  (12):\penalty0 5278--5311, 2019.

\bibitem[De~Bortoli et~al.(2021)De~Bortoli, Durmus, Pereyra, and
  Vidal]{de2021efficient}
Valentin De~Bortoli, Alain Durmus, Marcelo Pereyra, and Ana~F Vidal.
\newblock Efficient {{Stochastic Optimisation}} by {{Unadjusted Langevin Monte
  Carlo}}.
\newblock \emph{Statistics and Computing}, 31\penalty0 (3):\penalty0 1--18,
  2021.

\bibitem[Delyon et~al.(1999)Delyon, Lavielle, and
  Moulines]{delyon1999convergence}
Bernard Delyon, Marc Lavielle, and Eric Moulines.
\newblock Convergence of a {{Stochastic Approximation Version}} of the {{EM
  Algorithm}}.
\newblock \emph{Annals of Statistics}, pages 94--128, 1999.

\bibitem[Dempster et~al.(1977)Dempster, Laird, and Rubin]{dempster1977maximum}
Arthur~P Dempster, Nan~M Laird, and Donald~B Rubin.
\newblock Maximum {{Likelihood}} from {{Incomplete Data}} via the {{EM
  Algorithm}}.
\newblock \emph{Journal of the Royal Statistical Society: Series B (Statistical
  Methodology)}, 39:\penalty0 2--38, 1977.

\bibitem[Dochain(2003)]{dochain2003State}
Denis Dochain.
\newblock State and {{Parameter Estimation}} in {{Chemical}} and {{Biochemical
  Processes}}: {{A Tutorial}}.
\newblock \emph{Journal of Process Control}, 13\penalty0 (8):\penalty0
  801--818, December 2003.
\newblock ISSN 0959-1524.
\newblock \doi{10.1016/S0959-1524(03)00026-X}.

\bibitem[Duffin et~al.(2021)Duffin, Cripps, Stemler, and
  Girolami]{duffin2021statistical}
Connor Duffin, Edward Cripps, Thomas Stemler, and Mark Girolami.
\newblock Statistical {{Finite Elements}} for {{Misspecified Models}}.
\newblock \emph{Proceedings of the National Academy of Sciences}, 118\penalty0
  (2), January 2021.
\newblock ISSN 0027-8424, 1091-6490.
\newblock \doi{10.1073/pnas.2015006118}.

\bibitem[Durmus and Moulines(2017)]{durmus2017nonasymptotic}
Alain Durmus and {\'E}ric Moulines.
\newblock Nonasymptotic {{Convergence Analysis}} for the {{Unadjusted Langevin
  Algorithm}}.
\newblock \emph{Annals of Applied Probability}, 27\penalty0 (3):\penalty0
  1551--1587, 2017.

\bibitem[Durmus and Moulines(2019)]{durmus2019high}
Alain Durmus and {\'E}ric Moulines.
\newblock High-dimensional {{Bayesian Inference}} via the {{Unadjusted Langevin
  Algorithm}}.
\newblock \emph{Bernoulli. Official Journal of the Bernoulli Society for
  Mathematical Statistics and Probability}, 25\penalty0 (4A):\penalty0
  2854--2882, 2019.

\bibitem[Encinar et~al.(2024)Encinar, Crucinio, and
  Akyildiz]{encinar2024proximal}
Paula~Cordero Encinar, Francesca~R Crucinio, and O~Deniz Akyildiz.
\newblock Proximal {{Interacting Particle Langevin Algorithms}}.
\newblock \emph{arXiv preprint arXiv:2406.14292}, 2024.

\bibitem[Erdogan et~al.(2014)Erdogan, Gul, Catbas, and
  Bakir]{erdogan2014investigation}
Yildirim~Serhat Erdogan, Mustafa Gul, F~Necati Catbas, and Pelin~Gundes Bakir.
\newblock Investigation of {{Uncertainty Changes}} in {{Model Outputs}} for
  {{Finite-Element Model Updating}} using {{Structural Health Monitoring
  Data}}.
\newblock \emph{Journal of Structural Engineering}, 140\penalty0 (11):\penalty0
  04014078, 2014.

\bibitem[Ern and Guermond(2004)]{ern2004theory}
Alexandre Ern and Jean-Luc Guermond.
\newblock \emph{Theory and {{Practice}} of {{Finite Elements}}}, volume 159.
\newblock Springer, 2004.

\bibitem[Evans(2010)]{evans2010partial}
Lawrence~C Evans.
\newblock \emph{Partial {{Differential Equations}}}, volume~19.
\newblock American Mathematical Soc., 2010.

\bibitem[Evensen(2009)]{evensen2009ensemble}
Geir Evensen.
\newblock The {{Ensemble Kalman Filter}} for {{Combined State}} and {{Parameter
  Estimation}}.
\newblock \emph{IEEE Control Systems Magazine}, 29\penalty0 (3):\penalty0
  83--104, June 2009.
\newblock ISSN 1941-000X.
\newblock \doi{10.1109/MCS.2009.932223}.

\bibitem[Febrianto et~al.(2022)Febrianto, Butler, Girolami, and
  Cirak]{febrianto2022Digital}
Eky Febrianto, Liam Butler, Mark Girolami, and Fehmi Cirak.
\newblock Digital {{Twinning}} of {{Self-Sensing Structures Using}} the
  {{Statistical Finite Element Method}}.
\newblock \emph{Data-Centric Engineering}, 3:\penalty0 e31, January 2022.
\newblock ISSN 2632-6736.
\newblock \doi{10.1017/dce.2022.28}.

\bibitem[Girolami and Calderhead(2011)]{girolami2011riemann}
Mark Girolami and Ben Calderhead.
\newblock Riemann {{Manifold Langevin}} and {{Hamiltonian Monte Carlo
  Methods}}.
\newblock \emph{Journal of the Royal Statistical Society Series B: Statistical
  Methodology}, 73\penalty0 (2):\penalty0 123--214, 2011.

\bibitem[Girolami et~al.(2021)Girolami, Febrianto, Yin, and
  Cirak]{girolami2021statistical}
Mark Girolami, Eky Febrianto, Ge~Yin, and Fehmi Cirak.
\newblock The {{Statistical Finite Element Method}} ({{statFEM}}) for
  {{Coherent Synthesis}} of {{Observation Data}} and {{Model Predictions}}.
\newblock \emph{Computer Methods in Applied Mechanics and Engineering},
  375:\penalty0 113533, March 2021.
\newblock ISSN 0045-7825.
\newblock \doi{10.1016/j.cma.2020.113533}.

\bibitem[Hendeby and Gustafsson(2007)]{hendeby2007nonlinear}
Gustaf Hendeby and Fredrik Gustafsson.
\newblock On {{Nonlinear Transformations}} of {{Gaussian Distributions}}.
\newblock \emph{Link{\"o}ping University, SE-581}, 83, 2007.

\bibitem[Hughes(2003)]{hughes2003finite}
Thomas~JR Hughes.
\newblock \emph{The {{Finite Element Method}}: {{Linear Static}} and {{Dynamic
  Finite Element Analysis}}}.
\newblock Courier Corporation, 2003.

\bibitem[Jones et~al.(2023)Jones, Rogers, and
  Cross]{jonesConstrainingGaussianProcesses2023}
M.~R. Jones, T.~J. Rogers, and E.~J. Cross.
\newblock Constraining {{Gaussian Processes}} for {{Physics-Informed Acoustic
  Emission Mapping}}.
\newblock \emph{Mechanical Systems and Signal Processing}, 188:\penalty0
  109984, April 2023.
\newblock ISSN 0888-3270.
\newblock \doi{10.1016/j.ymssp.2022.109984}.

\bibitem[Kantas et~al.(2015)Kantas, Doucet, Singh, Maciejowski, and
  Chopin]{kantas2015Particle}
Nikolas Kantas, Arnaud Doucet, Sumeetpal~S. Singh, Jan Maciejowski, and Nicolas
  Chopin.
\newblock On {{Particle Methods}} for {{Parameter Estimation}} in {{State-Space
  Models}}.
\newblock \emph{Statistical Science}, 30\penalty0 (3):\penalty0 328--351,
  August 2015.
\newblock ISSN 0883-4237, 2168-8745.
\newblock \doi{10.1214/14-STS511}.

\bibitem[Karvonen et~al.(2022)Karvonen, Cirak, and Girolami]{karvonen2022error}
Toni Karvonen, Fehmi Cirak, and Mark Girolami.
\newblock Error {{Analysis}} for a {{Statistical Finite Element Method}}.
\newblock \emph{arXiv preprint arXiv:2201.07543}, 2022.

\bibitem[Kuntz et~al.(2023)Kuntz, Lim, and Johansen]{kuntz2023Particle}
Juan Kuntz, Jen~Ning Lim, and Adam~M. Johansen.
\newblock Particle {{Algorithms}} for {{Maximum Likelihood Training}} of
  {{Latent Variable Models}}.
\newblock In \emph{Proceedings of {{The}} 26th {{International Conference}} on
  {{Artificial Intelligence}} and {{Statistics}}}, pages 5134--5180. PMLR,
  April 2023.

\bibitem[Lim et~al.(2023)Lim, Kuntz, Power, and Johansen]{lim2023momentum}
Jen~Ning Lim, Juan Kuntz, Samuel Power, and Adam~M Johansen.
\newblock Momentum {{Particle Maximum Likelihood}}.
\newblock \emph{arXiv preprint arXiv:2312.07335}, 2023.

\bibitem[Moradkhani et~al.(2005)Moradkhani, Sorooshian, Gupta, and
  Houser]{moradkhani2005Dual}
Hamid Moradkhani, Soroosh Sorooshian, Hoshin~V. Gupta, and Paul~R. Houser.
\newblock Dual {{State}}--{{Parameter Estimation}} of {{Hydrological Models
  Using Ensemble Kalman Filter}}.
\newblock \emph{Advances in Water Resources}, 28\penalty0 (2):\penalty0
  135--147, February 2005.
\newblock ISSN 0309-1708.
\newblock \doi{10.1016/j.advwatres.2004.09.002}.

\bibitem[Nassar et~al.(2011)Nassar, Jones, Kulawik, Worden, Bowman, Andres,
  Suntharalingam, Chen, Brenninkmeijer, Schuck, et~al.]{nassar2011inverse}
RDBA Nassar, DBA Jones, SS~Kulawik, JR~Worden, KW~Bowman, Robert~Joseph Andres,
  Parvadha Suntharalingam, JM~Chen, CAM Brenninkmeijer, TJ~Schuck, et~al.
\newblock Inverse {{Modeling}} of {{C}}{{O}}$_2$ {{Sources}} and {{Sinks}}
  using {{Satellite Observations}} of {{C}}{{O}}$_2$ from {{T}}{{E}}{{S}} and
  {{Surface Flask Measurements}}.
\newblock \emph{Atmospheric Chemistry and Physics}, 11\penalty0 (12):\penalty0
  6029--6047, 2011.

\bibitem[Oliva and Akyildiz(2024)]{oliva2024kinetic}
Paul Felix~Valsecchi Oliva and O~Deniz Akyildiz.
\newblock Kinetic {{Interacting Particle Langevin Monte Carlo}}.
\newblock \emph{arXiv preprint arXiv:2407.05790}, 2024.

\bibitem[Papandreou et~al.(2023)Papandreou, Cockayne, Girolami, and
  Duncan]{papandreou2023Theoretical}
Yanni Papandreou, Jon Cockayne, Mark Girolami, and Andrew Duncan.
\newblock Theoretical {{Guarantees}} for the {{Statistical Finite Element
  Method}}.
\newblock \emph{SIAM/ASA Journal on Uncertainty Quantification}, 11\penalty0
  (4):\penalty0 1278--1307, December 2023.
\newblock \doi{10.1137/21M1463963}.

\bibitem[Platen and {Bruti-Liberati}(2010)]{platen2010numerical}
Eckhard Platen and Nicola {Bruti-Liberati}.
\newblock \emph{Numerical {{Solution}} of {{Stochastic Differential Equations}}
  with {{Jumps}} in {{Finance}}}, volume~64.
\newblock Springer Science \& Business Media, 2010.

\bibitem[Rasmussen and Williams(2008)]{rasmussenGaussianProcessesMachine2008}
Carl~Edward Rasmussen and Christopher K.~I. Williams.
\newblock \emph{Gaussian {{Processes}} for {{Machine Learning}}}.
\newblock Adaptive {{Computation}} and {{Machine Learning}}. MIT Press,
  Cambridge, Mass., 3. print edition, 2008.
\newblock ISBN 978-0-262-18253-9.

\bibitem[{Sanz-Alonso}(2023)]{sanz-alonso2023Inversea}
Daniel {Sanz-Alonso}.
\newblock \emph{Inverse {{Problems}} and {{Data Assimilation}}}.
\newblock Cambridge University Press, Cambridge New York, NY Port Melbourne New
  Delhi Singapore, 1st edition edition, August 2023.
\newblock ISBN 978-1-00-941429-6.

\bibitem[Sharrock et~al.(2024)Sharrock, Dodd, and Nemeth]{sharrock2024tuning}
Louis Sharrock, Daniel Dodd, and Christopher Nemeth.
\newblock Tuning-{{Free}} {{Maximum Likelihood Training}} of {{Latent Variable
  Models}} via {{Coin Betting}}.
\newblock In \emph{International {{Conference}} on {{Artificial Intelligence}}
  and {{Statistics}}}, pages 1810--1818. PMLR, 2024.

\bibitem[Solin and S{\"a}rkk{\"a}(2020)]{solinHilbertSpaceMethods2020}
Arno Solin and Simo S{\"a}rkk{\"a}.
\newblock Hilbert {{Space Methods}} for {{Reduced-Rank Gaussian Process
  Regression}}.
\newblock \emph{Statistics and Computing}, 30\penalty0 (2):\penalty0 419--446,
  March 2020.
\newblock ISSN 1573-1375.
\newblock \doi{10.1007/s11222-019-09886-w}.

\bibitem[Stuart(2010)]{stuart2010inverse}
A.~M. Stuart.
\newblock Inverse {{Problems}}: {{A Bayesian Perspective}}.
\newblock \emph{Acta Numerica}, 19:\penalty0 451--559, May 2010.
\newblock ISSN 0962-4929, 1474-0508.
\newblock \doi{10.1017/S0962492910000061}.

\bibitem[Tarantola(2005)]{tarantola2005inverse}
Albert Tarantola.
\newblock \emph{Inverse {{Problem Theory}} and {{Methods}} for {{Model
  Parameter Estimation}}}.
\newblock {Society for Industrial and Applied Mathematics}, January 2005.
\newblock ISBN 978-0-89871-572-9 978-0-89871-792-1.
\newblock \doi{10.1137/1.9780898717921}.

\bibitem[Wan and Nelson(1997)]{wan1997Dual}
Eric~A. Wan and Alex~T. Nelson.
\newblock Dual {{Kalman Filtering Methods}} for {{Nonlinear Prediction}},
  {{Smoothing}} and {{Estimation}}.
\newblock In \emph{Advances in {{Neural Information Processing Systems}}},
  pages 793--799, 1997.

\bibitem[Wan and Van Der~Merwe(2000)]{wan2000unscented}
Eric~A. Wan and Rudolph Van Der~Merwe.
\newblock The {{Unscented Kalman Filter}} for {{Nonlinear Estimation}}.
\newblock In \emph{Proceedings of the {{IEEE}} 2000 {{Adaptive Systems}} for
  {{Signal Processing}}, {{Communications}}, and {{Control Symposium}}
  ({{Cat}}. {{No}}. {{00EX373}})}, pages 153--158. Ieee, 2000.

\bibitem[Wei and Tanner(1990)]{wei1990monte}
Greg~CG Wei and Martin~A Tanner.
\newblock A {{Monte Carlo Implementation}} of the {{EM Algorithm}} and the
  {{Poor Man}}'s {{Data Augmentation Algorithms}}.
\newblock \emph{Journal of the American Statistical Association}, 85\penalty0
  (411):\penalty0 699--704, 1990.

\bibitem[Welling and Teh(2011)]{welling2011bayesian}
Max Welling and Yee~W Teh.
\newblock Bayesian {{Learning}} via {{Stochastic Gradient Langevin Dynamics}}.
\newblock In \emph{Proceedings of the 28th International Conference on Machine
  Learning ({{ICML-11}})}, pages 681--688. Citeseer, 2011.

\bibitem[Zhang et~al.(2023)Zhang, Akyildiz, Damoulas, and
  Sabanis]{zhang2023nonasymptotic}
Ying Zhang, {\"O}mer~Deniz Akyildiz, Theodoros Damoulas, and Sotirios Sabanis.
\newblock Nonasymptotic {{Estimates}} for {{Stochastic Gradient Langevin
  Dynamics}} under {{Local Conditions}} in {{Nonconvex Optimization}}.
\newblock \emph{Applied Mathematics \& Optimization}, 87\penalty0 (2):\penalty0
  25, 2023.

\end{thebibliography}

\pagebreak
\appendix
\section{Proof of Lemma~\ref{lem:assump_satisfied}}
\label{app:strong-convexity}
For a symmetric block matrix $M$ of the form 
\begin{align*}
    M = 
    \begin{bmatrix}
    A & B\\
    B^\top & C
    \end{bmatrix}
\end{align*}
we can use the Schur complement to determine the characteristics of this matrix in terms of its blocks, namely the fact it is positive definite (PD). We use the following from \cite{boyd2004convex}[Section~A.5.5]:
\begin{align*}
    \text{(1)}& \quad M\succ 0 \iff C \succ 0 \text{ and } A-BC^{-1}B^\top \succ 0,\\
    \text{(2)}& \quad \text{If }  C\succ 0, \text{ then, } M \succeq 0 \iff A-BC^{-1}B^\top \succeq 0.
\end{align*}
Now for the Hessian matrix we can assess if it is positive definite or positive semi definite (PSD) using these conditions, which will determine is we have $\mu>0$ for strong convexity. 
Define
\begin{align*}
\nabla^2\Psi_\theta^y = 
    \begin{bmatrix}
        \mA_{\theta}^{\top}\mG^{-1}\mA_{\theta} + \mH^{\top}\mR^{-1}\mH & -\mA_\theta^{\top}\mG^{-1}\\
        -\mG^{-1}\mA_{\theta}  & \mG^{-1} + \boldsymbol{\Sigma}^{-1}
    \end{bmatrix}
\end{align*}
Now since $\mG, \mSigma$ are covariance matrices, they are symmetric positive definite (SPD), i.e. $\mG\succ 0, \mSigma\succ 0$, as are their inverses $\mG^{-1}\succ 0, \mSigma^{-1}\succ 0$, which implies $\mG^{-1} + \mSigma^{-1} \succ 0$.  This meets the condition of $C\succ 0$ in (1), (2). Now we can check if the Schur complement is PD or PSD which will imply PD or PSD for the Hessian $\nabla^2\Psi_\theta^y$. Denote by $\mathbf{S}_\Psi$ the Schur complement of $\nabla^2\Psi_\theta^y$:
\begin{align*}
    \mathbf{S}_\Psi &= \mA_{\theta}^{\top}\mG^{-1}\mA_{\theta} + \mH^{\top}\mR^{-1}\mH -\mA_\theta^{\top}\mG^{-1}[\mG^{-1} + \boldsymbol{\Sigma}^{-1}]^{-1}\mG^{-1}\mA_{\theta}\\
    & = \mA_{\theta}^{\top}\left[ \mG^{-1} - \mG^{-1}[\mG^{-1} + \boldsymbol{\Sigma}^{-1}]^{-1}\mG^{-1}\right]\mA_{\theta} + \mH^{\top}\mR^{-1}\mH\\
    & = \mA_{\theta}^{\top}\left[ \mG + \mSigma\right]^{-1}\mA_{\theta} + \mH^{\top}\mR^{-1}\mH,
\end{align*}
where we have applied Woodbury's matrix inversion lemma 
\begin{align*}
    \mG^{-1} - \mG^{-1}[\mG^{-1} + \boldsymbol{\Sigma}^{-1}]^{-1}\mG^{-1} = \left[\mG + \mSigma\right]^{-1} ,
\end{align*}
for the second equality. 
Now we show that $\mH^\top\mR^{-1}\mH \succeq 0$ since $\mR^{-1}$ is SPD, which means it admits a Cholesky decomposition, $\mR^{-1} = \mathbf{L}_\mR\mathbf{L}_\mR^\top$, giving
\begin{align*}
    \vx^\top \mH^\top \mR^{-1}\mH\vx = \vx^\top \mH^\top \mathbf{L}_\mR\mathbf{L}_\mR^\top \mH\vx = \Vert \mathbf{L}_\mR^\top \mH \vx \Vert_2^2 \geq 0, \quad \forall \vx\in \bR^{n_u}, \vx \neq 0.
\end{align*}
We can also show that $\mA_{\theta}^{\top}[\mG + \mSigma]^{-1}\mA_{\theta}$ is PD, as $\left[\mG + \mSigma\right]^{-1}$ is SPD and $\mA_\theta$ is non-singular. We define $\vx = \mA_\theta \vy$
\begin{align*}
    \vx^\top \mA_{\theta}^{\top}\left[\mG + \mSigma\right]^{-1}\mA_{\theta}\vx = \vy^\top \mathbf{L} \mathbf{L}^\top\vy = \Vert \mathbf{L}^\top  \vy \Vert_2^2 > 0, \quad \forall \vy\in \bR^{n_u}, \vy \neq 0,
\end{align*}
which holds because we have for every $\vx\in\bR^{n_u}$, we have $\vy = \mA_{\theta}^{-1} \vx$. Putting this together, we can show the Schur complement of the Hessian is SPD:
\begin{align*}
    \vx^\top\mathbf{S}_\Psi\vx &= \vx^\top \mA_{\theta}^{\top}\left[\mG + \mSigma\right]^{-1}\mA_{\theta}\vx + \vx^\top \mH^\top \mR^{-1}\mH\vx\\
    &\geq \vx^\top \mA_{\theta}^{\top}\left[\mG + \mSigma\right]^{-1}\mA_{\theta}\vx\\
    &>0,
\end{align*}
which implies $\nabla^2\Psi_\theta^y\succ 0$ from (1), hence $\exists\mu>0$ where $\mu = \lambda_{\text{min}}(\nabla^2\Psi_\theta^y)$. 
It is also worth noting that without the inclusion of the prior on $\vb$, when just considering the \gls*{MMLE} potential $\Phi_\theta^y$, we have 
\begin{align*}
\nabla^2\Phi_\theta^y = 
    \begin{bmatrix}
        \mA_{\theta}^{\top}\mG^{-1}\mA_{\theta} + \mH^{\top}\mR^{-1}\mH & -\mA_\theta^{\top}\mG^{-1}\\
        -\mG^{-1}\mA_{\theta}  & \mG^{-1}
    \end{bmatrix}
    ,
\end{align*}
and the Schur complement becomes 
\begin{align*}
    \mathbf{S}_\Phi &= \mA_{\theta}^{\top}\mG^{-1}\mA_{\theta} + \mH^{\top}\mR^{-1}\mH -\mA_\theta^{\top}\mG^{-1}\mG\mG^{-1}\mA_{\theta}\\
    & = \mH^{\top}\mR^{-1}\mH \succeq 0,
\end{align*}
which we have shown is only PSD, which implies via (2) that $\nabla^2\Phi_\theta^y\succeq 0$, and hence there may not exist a positive minimum eigenvalue of the Hessian, which only implies convexity, and not \textit{strong} convexity.

\section{Proof of Theorem~\ref{thm:convergence_forcing}}\label{proof:thm:convergence_forcing}
Our proof is based on Theorem~1 of \cite{dalalyan2019user}. It involves re-writing \gls*{IPLA} as an instance of \gls*{ULA} and then invoking standard theoretical guarantees. Let us define a target measure on $\mathbb{R}^{(N+1) n_u}$
\begin{align}\label{eq:target_IPLA}
\tilde{\pi}(\vb, \vp_1, \ldots, \vp_N) \propto \exp\left(-\sum_{i=1}^N \Psi_{\theta}^y(\sqrt{N} \vp_i, \vb)\right).
\end{align}
It is clear that $\vb$-marginal of this measure will concentrate on the \gls*{MMAP} solution as $\tilde{\pi}_{\vb}(\vb) \propto \exp(N \log p(\vb | \vy))$ by computing $\tilde{\pi}_{\vb}(\vb) = \int \tilde{\pi}(\vb, \vp_1, \ldots, \vp_N) \md \vp_1 \ldots \md \vp_N$. It is also clear that the potential $\Psi^y_{\theta,N} (\vb, \vp_1, \ldots, \vp_N) = \sum_{i=1}^N \Psi^y_{\theta}(\sqrt{N} \vp_i, \vb)$ is $\tilde{\mu} = N\mu$ strongly convex and $\tilde{L}=NL$ gradient Lipschitz based on Lemma~\ref{lem:assump_satisfied} and \cite{oliva2024kinetic}[Lemma~A.4 and A.5]. 
The factor of $N$ comes as a result of the potential being a sum of $N$ potentials with Lipschitz and strong convexity constants $L$, $\mu$, and hence $N\mu \preceq \nabla^2 \Psi_{\theta,N}^{y} \preceq NL$.
Now, let us write the standard \gls*{ULA} for the target \eqref{eq:target_IPLA}
\begin{align*}
\md \vb_t &= - \sum_{n=1}^N \nabla_\vb \Psi_{\theta}^y(\sqrt{N} \vp_{t}^{(n)}, \vb_t) \md t + \sqrt{2} \md \mathbf{B}_t^0, \\
\md \vp_{t}^{(n)} &= - \sqrt{N} \nabla_{1} \Psi_{\theta}^y(\sqrt{N} \vp_t^{(n)}, \vb_t) \md t + \sqrt{2} \md \mathbf{B}_t^n, \quad n = 1, \ldots, N,
\end{align*}
where $\nabla_1$ denotes the gradient w.r.t.~the first argument of $\Psi^y_{\theta}$ and $(\mathbf{B}_t^n)_{t\geq 0}$ for $n = 0, \ldots, N$ are $n_u$-dimensional Brownian motions. 
Consider the Euler-Maruyama discretisation of this system with a step-size $\tilde{\gamma}$
\begin{align*}
\vb_{k+1} &= \vb_k - \tilde{\gamma} \sum_{n=1}^N \nabla_\vb \Psi_{\theta}^y(\sqrt{N} \vp_{k}^{(n)},\vb_k) + \sqrt{2\tilde{\gamma}} \mathbf{w}_{k+1}^0, \\
\vp_{k+1}^{(n)} &= \vp_k^{(n)} - \tilde{\gamma}\sqrt{N} \nabla_{1} \Psi_{\theta}^y(\sqrt{N} \vp_k^{(n)}, \vb_k) + \sqrt{2\tilde{\gamma}} \mathbf{w}_{k+1}^n, \quad n = 1, \ldots, N,
\end{align*}
where $\mathbf{w}_k^n \sim \mathcal{N}(0, I_{n_u})$ for every $k\geq 0$ and $n = 0, \ldots, N$. 
It is clear that, if we write $\vu_k^{(n)} = \sqrt{N} \vp_k^{(n)}$, and set $\tilde{\gamma} = \gamma /N$ then the above system is equivalent to \gls*{IPLA} given in eqs.~\eqref{eq:ipla-forcing-b}--\eqref{eq:ipla-forcing-u}. 
We now apply Theorem~1 of \cite{dalalyan2019user} directly with the condition $\tilde{\gamma} \leq 2/(\tilde{\mu} + \tilde{L})$, results in the bound
\begin{align*}
W_2(\nu_k, \tilde{\pi}) \leq (1 - \tilde{\mu} \tilde{\gamma})^{k} W_2(\nu_0, \tilde{\pi}) + 1.65 \frac{\tilde{L}}{\tilde{\mu}} \sqrt{(N+1)n_u} \tilde{\gamma}^{1/2}.
\end{align*}
At this point, one must note the restriction can be rewritten as follows
\begin{align*}
\tilde{\gamma} \leq 2 / (\tilde{\mu} + \tilde{L}) \iff \gamma \leq 2 / (\mu + L).
\end{align*}
Hence in practice it suffices to assume $\gamma \leq 2 / (\mu + L)$. Similarly, the bound above can be rewritten by substituting $\tilde{L} = NL$, $\tilde{\mu} = N \mu$, and $\tilde{\gamma} = \gamma / N$,
\begin{align*}
W_2(\nu_k, \tilde{\pi}) \leq (1 - \mu \gamma)^{k} W_2(\nu_0, \tilde{\pi}) + 1.65 \frac{L}{\mu} \sqrt{\frac{(N+1)n_u}{N}} \gamma^{1/2}.
\end{align*}
Merging this with Proposition~3 in \cite{akyildiz2025Interacting} which ensures that
\begin{align*}
W_2(\delta_{\vb^\star}, \tilde{\pi}_{\vb}) \leq \sqrt{\frac{2 n_u}{\mu N}},
\end{align*}
and the fact that $W_2(\nu_{k,\vb}, \tilde{\pi}_\vb) \leq W_2(\nu_k, \tilde{\pi})$, we obtain our bound by using
\begin{align*}
\bE[\|\vb_k - \vb^\star\|^2]^{1/2} &= W_2(\nu_{k,\vb}, \delta_{\vb^\star}) \leq W_2(\nu_{k,\vb}, \tilde{\pi}_\vb) + W_2(\tilde{\pi}_\vb, \delta_{\vb^\star}) \\
&\leq W_2(\nu_k, \tilde{\pi}) + W_2(\tilde{\pi}_\vb, \delta_{\vb^\star}),
\end{align*}
from which the bound follows.

\section{Forcing estimation for Linear PDE - analytic solution}
\label{appendix:map-solution}
To derive the \gls*{MMAP} forcing estimate, we start by marginalising out the latent variables $\vu$ from the joint distribution $p(\vy | \theta, \vb) = \int p(\vy|\vu) p(\vu|\theta,\vb)\md\vu$.
With models $\mA_\theta\vu = \vb + \mathbf{e}$ and $\vy = \mH \vu + \vr$, where $\mathbf{e} \sim \mathcal{N}(\mathbf{0},\mG)$ and $\mathbf{r} \sim \mathcal{N}(\mathbf{0},\mR)$ we can marginalise analytically giving $p(\vy |\theta, \vb) = \mathcal{N}\left(\mH\mA_\theta^{-1}\vb, \mH\mA_\theta^{-1} \mG \mA_\theta^{-\top}\mH^{\top} + \mR\right).$
With the conjugate prior $p(\vb) = \mathcal{N}(\boldsymbol{\mu},\boldsymbol{\Sigma})$, we can calculate the posterior $p(\vb|\vy,\theta)\propto p(\vy|\theta,\vb)p(\vb)$, where 
\begin{align}
    \log p(\vb|\vy) = -\frac{1}{2}\left(\vy - \mH\mA_\theta^{-1}\vb\right)^{\top} \left[\mH\mA_\theta^{-1} \mG \mA_\theta^{-\top}\mH^{\top} + \mR\right]^{-1} \left(\vy - \mH\mA_\theta^{-1}\vb\right) 
    \nonumber\\ \quad
    -\frac{1}{2} \left(\vb - \boldsymbol{\mu}\right)^{\top} \boldsymbol{\Sigma}^{-1} \left(\vb - \boldsymbol{\mu}\right) + \text{const},
\end{align}
where \gls*{MMAP} estimator, $\vb^{\star} = \argmax_{\vb}p(\vb|\theta,\vy)$ follows as the mean of the Gaussian posterior $p(\vb|\theta,\vy)$:
\begin{align}\label{eq:mmap-analytic}
\begin{split}
    \vb^\star = \bigg[\mA_\theta^{-\top}\mH^{\top}\left[\mH\mA_\theta^{-1} \mG \mA_\theta^{-\top}\mH^{\top} + \mR \right]^{-1} \mH\mA_\theta^{-1} + \boldsymbol{\Sigma}^{-1}\bigg]^{-1} &\\
    \bigg[\mA_\theta^{-\top}\mH^{\top}\left[\mH\mA_\theta^{-1} \mG \mA_\theta^{-\top}\mH^{\top} + \mR\right]^{-1} \vy + &\boldsymbol{\Sigma}^{-1}\boldsymbol{\mu}\bigg].
\end{split}
\end{align}
For calculating the true posterior variance in Figure~\ref{fig:diffusivity-estimation}, with $p(\vu|\vy, \vb^\star) \sim \mathcal{N}(\vm_{\vu|\vy}, \mC_{\vu|\vy})$:
\begin{align}
     \mC_{\vu|\vy} = \left[\mA_\theta^\top \mG^{-1} \mA_\theta + \mH^\top \mR^{-1} \mH\right]^{-1}, \quad \vm_{\vu|\vy} = \mC_{\vu|\vy}(\mH^\top \mR^{-1} \vy + \mA_\theta^\top \mG^{-1}\vb^\star).
\end{align}

\section{Nonlinear Poisson discretisation}
\label{appendix:nonlinear-pde}
The strong form of the nonlinear Poisson equation with Dirichlet boundary conditions reads as 
\begin{align*}
\begin{cases}
-\nabla \cdot (q(u)\nabla u) = f + \epsilon\quad &\text{in } \Omega\\
\hspace{22.3mm} u = u_D \quad &\text{on } \partial\Omega.
\end{cases}
\end{align*}
We wish to find solutions in the Sobolev space, respecting the Dirichlet boundary conditions, $V = \left\{ v \in H_0^1(\Omega)\colon v = u_D \text{ on } \partial\Omega\right\}$. Converting to the weak form proceeds as before, by multiplication with a test function $v\in \hat{V}$, where $\hat{V} = \left\{ v \in H_0^1(\Omega)\colon v = 0 \text{ on } \partial\Omega\right\}$ and integrating over the domain
\begin{align}\label{eq:nonlinear-weak-basic}
-\int_{\Omega}\nabla \cdot (q(u)\nabla u)v\md x = \int_{\Omega}\left(f + \epsilon\right)v \md x.
\end{align}
The product rule for divergence can be used to rewrite the left hand side integrand as follows
\begin{align}\label{eq:div-product-rule}
    \nabla \cdot (q(u)\nabla u)v = \nabla \cdot (q(u)\nabla u v) - q(u)\nabla u \cdot \nabla v.
\end{align}
Substituting \eqref{eq:div-product-rule} into \eqref{eq:nonlinear-weak-basic} gives
\begin{align}\label{eq:nonlinear-weak-product}
-\int_{\Omega}\nabla \cdot (q(u)\nabla u v) \md x + \int_{\Omega}q(u)\nabla u \cdot \nabla v\md x = \int_{\Omega}\left(f + \epsilon\right)v \md x.
\end{align}
The first term in \eqref{eq:nonlinear-weak-product} can be expressed as a surface integral via the divergence theorem \cite{brenner2008mathematical}, and since $v=0$ on $\partial\Omega$, this term vanishes
\begin{align}
  \int_{\Omega}\nabla \cdot (q(u)\nabla u v) \md x = \int_{\partial\Omega} (q(u)\nabla u v)\cdot \nu ds = 0,
\end{align}
where $\nu$ denotes the outward normal vector to $\partial\Omega$. The nonlinear weak form 
\begin{align}\label{eq:nonlinear-weak-final}
    \int_{\Omega}q(u)\nabla u \cdot \nabla v\md x = \int_{\Omega}\left(f + \epsilon\right)v \md x. 
\end{align}
Again we specify a basis function expansion of the solution $u_h = \sum_{j=1}^{n_u}\hat{u}_j \phi_j$, and search for solutions in the space $V_h = \mathrm{span}\left\{\phi_j\right\}_{j=1}^{n_u}$, and test with the basis functions individually giving a system of nonlinear algebraic equations in the \gls*{FEM} coefficients,
\begin{align}
    \int_{\Omega}q(u_h)\nabla u_h \cdot \nabla \phi_{j}\md x = \int_{\Omega}\left(f + \epsilon\right)\phi_j \md x, \quad \forall j\in \{1, \dots, n_u\},
\end{align}
which can be written in vector form
\begin{align}\label{eq:app-nonlinear-discrete}
    \mF(\vu) = \vb + \ve, \quad \ve \sim \mathcal{N}(\mathbf{0}, \mG),
\end{align}
where we have $\vu = \left[\hat{u}_1, \hat{u}_2, \dots, \hat{u}_{n_u}\right]^{\top}$, $\left[\mF(\vu)\right]_i = \left\langle q(u_h)\nabla u_h , \nabla \phi_i\right\rangle$, $\left[\vb\right]_i = \langle f, \phi_i\rangle$, and  $\left[\mG\right]_{ij} = \langle \phi_i, \langle k(\cdot, \cdot), \phi_j \rangle\rangle$. Section \ref{subsec:nonlinear-poisson} describes the approximations made to the \gls*{statFEM} prior, all of which require the system \eqref{eq:app-nonlinear-discrete} to be solved. This discrete problem is efficiently solved using Newton's method. 

\section{Approximations to Nonlinearly Transformed Gaussian Random Variables}\label{appendix:nonlinear-approximations}
Here we present the three methods for deriving a Gaussian approximation to a nonlinearly transformed Gaussian random variable. The transform we consider is of the form $\mF(\vu) = \vb + \ve$, where we wish to approximate $\vu$ with a Gaussian, given $\ve\sim\mathcal{N}(\mathbf{0}, \mG)$.
\subsection{First-order Taylor}\label{appendix:fot}
We linearise the system about the mean point $\vm$, found as the solution of $\mF(\vm) = \vb$. By replacing $\vu$ in \eqref{eq:app-nonlinear-discrete} with $\vm$, as follows
\begin{align}\label{eq:app-taylor-stochastic}
    \mF(\vu) - \vb\approx \mJ(\vm)(\vu - \vm), \quad \text{giving} \quad  \vu = \vm + \mJ(\vm)^{-1} \ve, \quad \ve \sim \mathcal{N}(\mathbf{0}, \mG).
\end{align}
Computing the second moment,  \begin{align}
    \mC = \bE\left[(\vu-\vm)(\vu-\vm)^{\top}\right] = \mJ(\vm)^{-1}\mathbb{E}\left[\ve\ve^{\top}\right]\mJ(\vm)^{-\top} = \mJ(\vm)^{-1}\mG\mJ(\vm)^{-\top},
\end{align}
we arrive at the approximate Gaussian distribution $\vu\sim\mathcal{N}(\vm, \mC)$.

\subsection{Unscented transform}\label{appendix:ut}
Selecting the sigma-points to evaluate the nonlinear transform requires the singular value decomposition of the covariance $\mG = \sum_{j=1}^{n_u}\sigma_j^2 \mathbf{v}_j \mathbf{v}_j^\top$, and parameters $\alpha, \beta$
\begin{align}
    \vs^{(0)} &= \vb, \quad \vs^{(\pm j)} = \vb \pm \sqrt{n_u + \lambda}\sigma_j \mathbf{v}_j\\
w^{(0)} &= \frac{\lambda}{n_u + \lambda}, \quad w^{(\pm j)} = \frac{1}{2(n_u + \lambda)}
\end{align}
where $j=1,\dots, n_u$ and $\lambda = \alpha^2(n_u + \kappa) - n_u$. The parameter $\alpha$ controls the spread of the sigma points and is typically chosen as $10^{-3}$, and $\beta$ chosen as $\beta=2$ (See~\cite{hendeby2007nonlinear}). 
\begin{align}
     \vm = \sum_{j=-n_u}^{n_u}w^{(j)}\vu^{(j)},  \quad \mC = \sum_{j=-n_u}^{n_u}\hat{w}^{(j)}(\vu^{(j)} - \vm)(\vu^{(j)} - \vm)^{\top}, 
\end{align}
where $\hat{w}^{(0)} = w^{(0)} + 1 - \alpha^2 + \beta $ and $ \hat{w}^{(j)} = w^{(j)}$ for $j\neq 0$.

\subsection{Monte Carlo}\label{appendix:mc} The first two moments of the non-Gaussian \gls*{statFEM} prior are approximated using Monte Carlo samples. The approximation is of the form $p(\vu | \vb) \approx \mathcal{N}(\mathbf{m}, \mC)$, samples are generated via $M$ forward solves of the \gls*{PDE} where the noise is sampled from its Gaussian distribution, $\mF(\mathbf{u}^{(j)}) = \mathbf{b} + \mathbf{e}^{(j)}$ with, $\mathbf{e}^{(j)} \sim \mathcal{N}(\mathbf{0}, \mathbf{G}), \forall j\in\{1, \dots, M\}$ giving the \gls*{MC} samples $\left\{\vu^{(j)}\right\}_{j=1}^M$. The statistics can then be computed empirically
\begin{align}
     \vm = \frac{1}{M} \sum_{j=1}^{M}\vu^{(j)},  \quad \mC = \frac{1}{M-1} \sum_{j=1}^{M}(\vu^{(j)} - \vm)(\vu^{(j)} - \vm)^{\top}.
\end{align}

\section{Hilbert space GP}
\label{appendix:hilbert-space-GP}

In this section we show how the Hilbert space \gls*{GP} covariance function is approximated with an \gls*{FEM} basis. To determine the functions $g_l(x)$ that form the covariance approximation in \eqref{eq:hilbert-covariance-approx}, we must solve the eigenfunction problem for the Laplacian \cite{jonesConstrainingGaussianProcesses2023, solinHilbertSpaceMethods2020}, with homogeneous Dirichlet boundary conditions
\begin{align}
\forall l\in \{1,\dots, n\}, \quad \text{  solve  }
\begin{cases}
    -\nabla^2 g_l(x) = \lambda_l g_l(x), &\quad x\in\Omega\\
    g_l(x) = 0, &\quad x\in\partial\Omega.
\end{cases}
\end{align}
We wish to solve for eigenpairs $\left\{ \lambda_l, g_l \right\}_{l=1}^{n}$.
For a generic mesh defined over a domain $\Omega$, we can solve this problem using FEM. With our \gls*{FEM} basis defined as $\left\{\phi_i\right\}_{i=1}^{n}$, we can take the inner product with the $j^{\text{th}}$ basis function, and application of integration by parts gives us the weak form
\begin{align}
    \int_{\Omega} \nabla g_l(x) \cdot \nabla \phi_j(x) \md x = \int_{\Omega} \lambda_l g_l(x) \phi_j(x) \md x.
\end{align}
If we assume a basis function approximation of the eigenfunction, i.e. $\hat{g}_l(x) = \sum_{i=1}^{n} \hat{g}_{li}\phi_i(x)$, we can write this as
\begin{align}
    \sum_{i=1}^{n} \hat{g}_{li} \int_{\Omega} \nabla \phi_i(x) \cdot \nabla \phi_j(x) \md x = \lambda_l  \sum_{i=1}^{n} \hat{g}_{li}\int_{\Omega}  \phi_i(x) \phi_j(x) \md x,
\end{align}
which can be written in the form of a generalised eigenproblem
\begin{align}\label{eq:generalised-eigenvalue}
    \mathbf{A}\hat{\mathbf{g}}_l = \lambda_l \mathbf{M} \hat{\mathbf{g}}_l,
\end{align}
where $\left[\mathbf{A}\right]_{ij} = \int_{\Omega} \nabla \phi_i \cdot \nabla \phi_j \md x$ is the stiffness matrix, $\left[\mathbf{M}\right]_{ij} = \int_{\Omega} \phi_i \phi_j \md x$ the mass matrix, and $\hat{\mathbf{g}}_l = \left[\hat{g}_{l1}, \dots, \hat{g}_{ln}\right]^{\top}$ are the eigenvectors to be solved for. Equation \eqref{eq:generalised-eigenvalue} can be solved using standard linear algebra libraries; here we use the function \texttt{scipy.linalg.eig}$(\mA, \mM)$ which provides unit $l_2$-norm eigenvectors $\hat{\mathbf{g}}_l$. The eigenfunction approximations are converted to unit $L^2$-norm functions by normalising the coefficients, i.e. $g_l(x) = \sum_{i=1}^n \tilde{g}_{li}\phi_i(x)$ where, $\tilde{g}_{li}=\hat{g}_{li}/\sqrt{\langle\hat{g}_l, \hat{g}_l \rangle}$.
In this paper we exclusively use the squared exponential kernel, parameterised by an amplitude $\sigma^2$ and length-scale $\ell$, which has the following form and associated spectral density \cite{rasmussenGaussianProcessesMachine2008}
\begin{align}
    k_{\text{SE}}(r) = \sigma^2 \exp\left(-\frac{r^2}{2\ell^2}\right), \quad
    S_{\text{SE}}(\omega) = \sigma^2 (2\pi\ell^2)^{D/2} \exp\left(-\frac{\omega^2\ell^2}{2}\right),
\end{align}
where $r = \Vert x-x' \Vert$, $x\in\mathbb{R}^D$. 
Figure \ref{fig:disc-eigenfunctions} visualises the eigenfunctions computed via \eqref{eq:generalised-eigenvalue} for the disc geometry. 
\begin{figure}[ht!]
    \includegraphics[width=\textwidth]{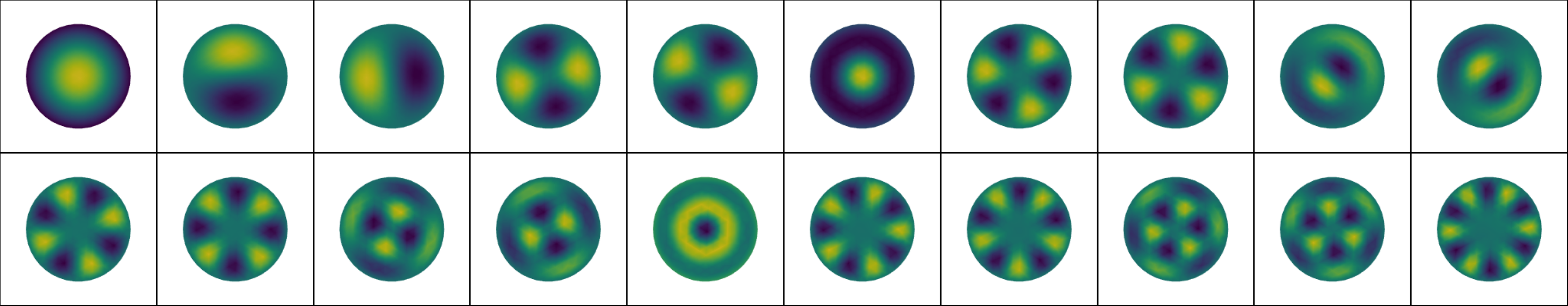}
    \caption{FEM approximations to the eigenfunctions of the Laplacian for the domain $\Omega = \left\{\vx \in \mathbb{R}^2 \colon \Vert\vx\Vert < 1 \right\}$, for the smallest 20 eigenvalues.}
    \label{fig:disc-eigenfunctions}
\end{figure}

\end{document}